\documentclass[11pt, a4paper]{article}
\usepackage[utf8]{inputenc}
\usepackage[T1]{fontenc}
\usepackage{amsmath, amssymb, amsthm}
\usepackage{amsfonts}
\usepackage{booktabs}
\usepackage{cite}
\usepackage{graphicx}
\usepackage{hyperref}
\usepackage{geometry}
\usepackage{xcolor}
\usepackage{enumitem}
\usepackage{multirow}
\usepackage{url}
\usepackage{tabularx}
\usepackage{tikz}
\usepackage{pgfplots}
\usepackage{algorithm}
\usepackage{algorithmic}
\usepackage{bm}
\usepackage{comment}
\usepackage{array}
\usepackage{authblk}              
\definecolor{PassGreen} {HTML}{27AE60}
\newtheorem{theorem}{Theorem}
\newtheorem{definition}{Definition}
\newtheorem{lemma}{Lemma}
\newtheorem{claim}{Claim}
\newtheorem{remark}{Remark}

\begin{document}

\title{When the Learning With Errors Problem Meets the Coherent Ising Machine: A Penalty-Free Algorithm-Hardware Co-Design}
\author{Shuxian Jiang\thanks{Email: jiangshuxian.grace@gmail.com}}
\date{}
\maketitle

\begin{abstract}
The Learning With Errors (LWE) problem constitutes the mathematical foundation of modern Post-Quantum Cryptography (PQC). Cryptanalysis of LWE ranges from classical lattice reduction to machine learning and quantum--classical hybrids.

We propose CIM-BDD, a hybrid Bounded-Distance-Decoding solver that reduces LWE to a Quadratic Unconstrained Binary Optimization (QUBO) problem through a strictly \emph{penalty-free} mapping. An algebraic elimination of the secret embeds LWE into a $q$-ary lattice, absorbing the modular arithmetic and recasting the problem as a Closest Vector Problem (CVP). The squared error norm is then used \emph{directly} as the QUBO energy, so the cryptographic noise is the objective to be minimized rather than a penalized constraint. To realize this general model on current Noisy Intermediate-Scale Quantum (NISQ) devices, we design a special encoding method: a Continuous Relaxed Babai's Nearest Plane (CR-BNP) projection drives an adaptive mixed-radix encoder that greatly reduces both the qubit count and the QUBO coefficient range, so that a single batched hardware submission suffices. We further derive a statistically bounded early-stopping threshold ($T_{\text{early}}$) that acts as a one-sided certificate and doubles as a Decision-LWE distinguisher.

We validate the framework on the TU Darmstadt LWE Challenge, giving an end-to-end demonstration for both Search- and Decision-LWE of a $40$-dimensional instance on the Coherent Ising Machine CPQC-550. This work establishes a new algorithm-hardware co-design paradigm for quantum-classical hybrid cryptanalysis.
\end{abstract}

\section{Introduction}
Unlike integer factorization and the discrete logarithm problem, the Learning With Errors (LWE) problem, introduced by Oded Regev~\cite{Regev05}, is believed to resist quantum attack. It is extensively utilized in Post-Quantum Cryptography (PQC) and provides the theoretical foundation for NIST standard schemes such as ML-KEM (FIPS 203) and ML-DSA (FIPS 204)~\cite{FIPS203, FIPS204}, and it is the fundamental building block of Fully Homomorphic Encryption (FHE). LWE enjoys a worst-case to average-case guarantee: solving the average-case problem is at least as hard as worst-case lattice problems such as the Shortest Vector Problem (SVP). Concretely, LWE asks to recover a secret vector $\mathbf{s} \in \mathbb{Z}_q^n$ from a matrix $\mathbf{A}$ and a vector $\mathbf{c}$ with
\begin{equation} \label{eq:lwe_def}
    \mathbf{c} = \mathbf{A}\mathbf{s} + \mathbf{e} \pmod{q},
\end{equation}
where $\mathbf{e}$ is a small error vector drawn from a discrete Gaussian distribution. This unknown error is what makes the system intractable for both classical and quantum algorithms.

Evaluating the concrete security of LWE is an active area of research. In classical cryptanalysis, the lattice-based method that maps LWE to Bounded Distance Decoding (BDD) and solves it via lattice basis reduction, notably BKZ~\cite{Schnorr1994, Chen2011, Aono2016}, remains dominant, complemented by advanced estimators based on enumeration with pruning or sieving~\cite{Albrecht2015, BDGL16}. Locating closest or shortest vectors is, however, (sub)exponential. In parallel, machine-learning attacks~\cite{Wenger2022, Li2023Picante} have emerged, but they rely on large training datasets and scale poorly to cryptographic dimensions~\cite{Stevens2024, Wenger2025}.

With the advent of NISQ devices, attention has shifted to quantum--classical hybrids. For instance, Lv \textit{et al.}~\cite{Lv2022} use QAOA to optimize the nearest plane algorithm in order to solve LWE and apply a Variational Quantum Eigensolver to the unique SVP (uSVP), while Zheng \textit{et al.}~\cite{Zheng2025} encode lattice vectors into an Ising model (HAWI). Reducing LWE to a lattice problem via nullspaces absorbs the modular arithmetic classically (as in HAWI~\cite{Zheng2025}), but framing the resulting problem as an SVP inherently introduces the trivial-zero symmetry. To avoid collapsing to the origin without explicit penalties, SVP-based solvers are forced to restrict variable domains and sequentially evaluate $\mathcal{O}(m)$ or $\mathcal{O}(n)$ independent sub-Hamiltonians~\cite{Zheng2025, DableHeath2023}, or attempt to target the first excited state~\cite{Joseph2020}. Conversely, direct LWE QUBO formulations~\cite{Qayyum2023, Qayyum2025} rely heavily on auxiliary slack variables and penalty coefficients to enforce modular constraints, which distort the energy landscape. Our method is similar to \cite{Lv2022}, but makes a different algorithm-hardware co-design. By modeling LWE as CVP (BDD) and using the squared error norm directly as the objective energy, we get a penalty-free QUBO formulation (as well as a justification), then partition the search space to model partial dimensions using a Continuous Relaxed Babai's Nearest Plane driven adaptive mixed-radix encoder. We present both theoretical analysis and experimental verification on real hardware for LWE challenge instance of dimension 40. 

\subsection{Our Contributions}
We propose the CIM-BDD framework, a hybrid BDD solver suitable for Coherent Ising Machines (CIMs)~\cite{McMahon2016, Wei2026}, which support fully-connected topologies. Our framework advances the field through both mathematical modeling and physical hardware realization:

\begin{itemize}
    \item \textbf{A Strictly Penalty-Free QUBO Formulation of LWE.}
    In Section~\ref{sec:modeling}, an algebraic elimination of the secret absorbs the modulus into a primal $q$-ary lattice and recasts LWE as a Closest Vector Problem (CVP), then uses the Babai-recentered least-squares residual as the QUBO energy. So the cryptographic noise becomes the objective rather than a penalized constraint. This yields a strictly penalty-free modeling.
    \item \textbf{A Compact Encoding for NISQ-Scale Qubit Budgets and Precision Limits.}
    To map this general model onto precision-limited hardware (Section~\ref{sec:hardware}), a Continuous Relaxed Babai's Nearest Plane (CR-BNP) projection acts as a prior driving an adaptive mixed-radix encoder (allocating $0/1/2$ bits). This compresses the target combinatorial space, restricting the $n=40$ instance to fewer than $20$ logical qubits.
    \item \textbf{A One-Sided Statistical Certificate and DLWE Distinguisher via $T_{\text{early}}$.}
    From the $\chi^2$ properties of the discrete-Gaussian error variance we derive an early-stopping threshold $T_{\text{early}} = m'\sigma^2 + 4\sigma^2\sqrt{2m'}$. Placing a four-sigma confidence boundary on the residual energy makes this a one-sided statistical certificate: a candidate that crosses below $T_{\text{early}}$ certifies the LWE secret directly from the ground-state energy, with no algebraic re-verification. The same binary crossing doubles as a Decision-LWE distinguisher.
\end{itemize}

\section{Preliminaries}

\subsection{Lattices and the LWE Problem}
A lattice is defined as the set of all integer linear combinations of linearly independent basis vectors. Let $\mathbf{b}_1, \mathbf{b}_2, \dots, \mathbf{b}_n \in \mathbb{R}^m$ (with $n \leq m$) be a set of linearly independent vectors. The lattice generated by these vectors is defined as:
\begin{equation}
    \mathcal{L}(\mathbf{b}_1, \dots, \mathbf{b}_n) = \left\{ \sum_{i=1}^n x_i \mathbf{b}_i : x_i \in \mathbb{Z} \right\}.
\end{equation}
The set of vectors $\{\mathbf{b}_1, \dots, \mathbf{b}_n\}$ is called a \textit{basis} of the lattice. It is often convenient to represent the basis as a matrix $\mathbf{B} = [\mathbf{b}_1, \mathbf{b}_2, \dots, \mathbf{b}_n] \in \mathbb{R}^{m \times n}$ having the basis vectors as its columns. Consequently, the lattice can be equivalently denoted as $\mathcal{L}(\mathbf{B}) = \{ \mathbf{B}\mathbf{x} : \mathbf{x} \in \mathbb{Z}^n \}$.

The security of lattice-based cryptography fundamentally relies on the presumed intractability of several well-known computational problems, like the Shortest Vector Problem (SVP) and the Closest Vector Problem (CVP). 

\begin{description}
    \item[Shortest Vector Problem (SVP):] Given a lattice basis $\mathbf{B}$, find a non-zero lattice vector $\mathbf{v} \in \mathcal{L}(\mathbf{B})$ such that its Euclidean norm $||\mathbf{v}||$ is minimized. The length of this shortest non-zero vector is known as the first successive minimum of the lattice, denoted as $\lambda_1(\mathcal{L})$.
    
    \item[Closest Vector Problem (CVP):] Given a lattice basis $\mathbf{B}$ and a target vector $\mathbf{t} \in \mathbb{R}^m$ (not necessarily in the lattice), find a lattice vector $\mathbf{v} \in \mathcal{L}(\mathbf{B})$ that minimizes the distance $||\mathbf{v} - \mathbf{t}||$.
\end{description}

SVP and CVP are the most foundational worst-case problems in lattice theory. Both problems are known to be NP-hard and remain computationally intractable for classical and quantum algorithms even when approximated within polynomial factors. However, cryptographic constructions generally require problems that are hard on \textit{average} rather than in the worst case. The Learning With Errors (LWE) problem bridges this gap.
\begin{definition}[Learning With Errors (LWE)]
Let $n, m, q$ be positive integers. Let $\mathbf{A} \in \mathbb{Z}_q^{m \times n}$ be a uniformly random matrix and $\mathbf{s} \in \mathbb{Z}_q^n$ be a secret vector. The LWE samples are given by the pair $(\mathbf{A}, \mathbf{c})$, where $\mathbf{c} = \mathbf{A}\mathbf{s} + \mathbf{e} \pmod{q}$, and $\mathbf{e} \in \mathbb{Z}^m$ is sampled independently from a discrete Gaussian distribution $\mathcal{D}_{\mathbb{Z}^m,\sigma}$ with standard deviation $\sigma = \alpha q$.
\end{definition}
The search variant (Search-LWE, SLWE) aims to find $\mathbf{s}$, the decision variant (Decision-LWE, DLWE) aims to distinguish $(\mathbf{A}, \mathbf{c})$ from uniform randomness.

\paragraph{$q$-ary Lattices}
Given an integer modulus $q \ge 2$ and a uniformly random public matrix $\mathbf{A} \in \mathbb{Z}_q^{m \times n}$ (with $m > n$), a primal $q$-ary lattice $\Lambda_q(\mathbf{A})$ is defined as the set of all integer vectors that belong to the image of $\mathbf{A}$ modulo $q$:
\begin{equation}
    \Lambda_q(\mathbf{A}) = \left\{ \mathbf{v} \in \mathbb{Z}^m \mathrel{\Big|} \exists \mathbf{s} \in \mathbb{Z}_q^n \text{ such that } \mathbf{v} \equiv \mathbf{A}\mathbf{s} \pmod q \right\}.
\end{equation}

\paragraph{LWE as Average-Case Bounded-Distance Decoding (BDD).}
The Bounded-Distance Decoding (BDD) problem is a variant of the closest vector problem (CVP), where the target point is guaranteed to be so close to the lattice that there is a unique closest vector. LWE can be seen as an average-case BDD problem over the $q$-ary lattice $\Lambda_q(\mathbf{A})$~\cite{LindnerPeikert2011}. Because any valid algebraic evaluation $\mathbf{A}\mathbf{s} \pmod q$ mathematically constitutes a true lattice point $\mathbf{v} \in \Lambda_q(\mathbf{A})$, the LWE target vector $\mathbf{c}$ represents a spatial coordinate shifted away from the lattice point $\mathbf{v}$ by the discrete Gaussian noise $\mathbf{e}$. Therefore, finding this closest lattice vector $\mathbf{v}$ isolates the error $\mathbf{e}$ and recovers the cryptographic secret $\mathbf{s}$.

\subsection{Quadratic Unconstrained Binary Optimization (QUBO)}
In combinatorial optimization, QUBO represents binary variables $\mathbf{x} \in \{0, 1\}^N$ minimizing $f(\mathbf{x}) = \mathbf{x}^\top \mathbf{Q} \mathbf{x}$, where $\mathbf{Q}$ is an upper-triangular real-weighted matrix. QUBO is mathematically isomorphic to the Ising Hamiltonian $H(\mathbf{z}) = - \sum_{i<j} J_{ij} z_i z_j - \sum_i h_i z_i$ for $z_i \in \{+1, -1\}$. Coherent Ising Machines (CIMs) support fully-connected graph topologies for these models via measurement-feedback architectures~\cite{McMahon2016, Hamerly2019}.

\section{The Exact Model: A Penalty-Free QUBO Formulation via a Least-Squares CVP Objective}
\label{sec:modeling}

This section establishes the mathematical core of the framework: a reduction
that turns LWE into an exact, penalty-free QUBO whose energy \emph{is} the
squared error norm. The guiding principle is to make the cryptographic noise
the quantity to be \emph{minimized} rather than a constraint to be
\emph{penalized}. Standard QA/CIM mappings of LWE either introduce auxiliary
variables to encode the modular arithmetic or add penalty/slack terms to
enforce feasibility, both of which distort the energy landscape. We instead remove
the modulus by algebraic elimination and define the objective as a
least-squares residual whose minimizer is, by maximum likelihood, the true
error vector. The result is a penalty-free cost function in which the
target sits at the unique global minimum and the trivial-zero symmetry is
broken by the residual's own linear term.

\subsection{Reduction from LWE to CVP}
We begin with the standard LWE relation over $\mathbb{Z}_q$: $\mathbf{A}\mathbf{s} + \mathbf{e} \equiv \mathbf{c} \pmod q$. To eliminate the secret vector $\mathbf{s}$, we define a matrix $\mathbf{W}$ whose columns form a basis for the left nullspace of $\mathbf{A}$ modulo $q$, satisfying $\mathbf{W}^\top \mathbf{A} \equiv \mathbf{0} \pmod q$. 
Multiplying the LWE relation by $\mathbf{W}^\top$ from the left yields:
\begin{equation}
    \mathbf{W}^\top \mathbf{c} \equiv \mathbf{W}^\top(\mathbf{A}\mathbf{s} + \mathbf{e}) = \mathbf{W}^\top \mathbf{A} \mathbf{s} + \mathbf{W}^\top \mathbf{e} \equiv \mathbf{W}^\top \mathbf{e} \pmod q
\end{equation}
Rearranging the terms, we obtain:
\begin{equation}
   \mathbf{W}^\top (\mathbf{e} - \mathbf{c}) \equiv \mathbf{0} \pmod q
\end{equation}

Eq. (5) indicates that the vector $(\mathbf{e} - \mathbf{c})$ lies precisely within the lattice spanned by the columns of $\mathbf{A}$ modulo $q$. Let $\mathbf{M}$ be the generating matrix of this $q$-ary lattice space, typically constructed as $\mathbf{M} = [\mathbf{A} \mid q\mathbf{I}]$. By explicitly extracting a full-rank integer basis for this lattice (as detailed in Appendix~\ref{sec:appendix_algebraic_basis}) and performing lattice basis reduction (such as LLL or BKZ), we obtain a reduced basis matrix $\mathbf{B}_0$.
Thus there exists an integer vector $\mathbf{y}$ satisfying $\mathbf{c} - \mathbf{e} = \mathbf{B}_0 \mathbf{y}$. 

The primary objective is to recover the error vector $\mathbf{e}$, which is sampled from a distribution with a small variance. This is mathematically equivalent to finding an integer vector $\mathbf{y}$ that minimizes the squared Euclidean norm of $\mathbf{e}$:
\begin{equation}
    \|\mathbf{e}\|^2 = \|\mathbf{c} - \mathbf{B}_0 \mathbf{y}\|^2 = \|\mathbf{B}_0 \mathbf{y} - \mathbf{c}\|^2
\end{equation}
This formulation represents a standard Closest Vector Problem (CVP), where the goal is to find a lattice point $\mathbf{B}_0 \mathbf{y}$ that is closest to the target vector $\mathbf{c}$.

\subsection{Search Space Recentering via Babai's Algorithm}
Babai's Nearest Plane algorithm\cite{Babai1985} provides an approximate closest lattice point, and we denote its corresponding coefficient vector as $\mathbf{y}_{\text{center}}$.

We parameterize the target integer vector $\mathbf{y}$ as an offset $\Delta \mathbf{y}$ from this center point:
\begin{equation}
    \Delta \mathbf{y} = \mathbf{y} - \mathbf{y}_{\text{center}} \implies \mathbf{y} = \Delta \mathbf{y} + \mathbf{y}_{\text{center}}
\end{equation}
Substituting this offset parameterization into our objective norm function, we obtain:
\begin{align}
    \|\mathbf{e}\|^2 &= \|\mathbf{B}_0 (\Delta \mathbf{y} + \mathbf{y}_{\text{center}}) - \mathbf{c}\|^2 \nonumber \\
            &= \|\mathbf{B}_0 \Delta \mathbf{y} + \mathbf{B}_0 \mathbf{y}_{\text{center}} - \mathbf{c}\|^2
\end{align}

To recenter the search space around the origin, we define a shifted target vector $\mathbf{c}_{\text{shift}}$ representing the residual error of the approximate center point:
\begin{equation}
    \mathbf{c}_{\text{shift}} = \mathbf{c} - \mathbf{B}_0 \mathbf{y}_{\text{center}}
\end{equation}
Substituting $\mathbf{c}_{\text{shift}}$, the distance minimization objective is transformed into:
\begin{equation}
    \|\mathbf{e}\|^2 = \|\mathbf{B}_0 \Delta \mathbf{y} - \mathbf{c}_{\text{shift}}\|^2
\end{equation}
By reformulating the problem with $\mathbf{c}_{\text{shift}}$, the subsequent search algorithm only needs to explore small displacement vectors $\Delta \mathbf{y}$ around the origin. This recentering technique significantly bounds the search radius and improves the overall computational efficiency of recovering the error vector $\mathbf{e}$.

\subsection{QUBO Objective Formulation for the LWE Problem}
We use $\|\mathbf{e}\|^2$ as the objective function to include the discrete Gaussian distribution of $\mathbf{e}$. This approach mathematically corresponds to performing Maximum Likelihood Estimation (MLE) under the assumption that the noise follows a discrete Gaussian distribution. Within the BDD regime the planted error is the unique closest offset of c to the lattice, so this least-squares minimizer recovers the true error with high probability. 

\begin{remark}[Statistical Equivalence via Maximum Likelihood Estimation]
\label{rem:mle}
For a Learning With Errors (LWE) instance with independent Gaussian noise, minimizing the squared Euclidean norm of the error vector $\|\mathbf{e}\|^2$ is mathematically equivalent to computing the Maximum Likelihood Estimate (MLE) of these variables.
\end{remark}

\begin{proof}
In the standard LWE setting, each component $e_i$ ($i = 1, \dots, m$) of the error vector $\mathbf{e}$ is sampled independently from a discrete Gaussian distribution over $\mathbb{Z}$ centered at $0$ with standard deviation $\sigma$. The probability function for a single error component $e_i$ is proportional to:
\begin{equation}
    P(e_i) \propto \exp\left( -\frac{e_i^2}{2\sigma^2} \right)
\end{equation}

Because the components $e_i$ are independent and identically distributed (i.i.d.), the joint probability distribution (or the likelihood function $\mathcal{L}$) of the entire error vector $\mathbf{e}$ is the product of the probabilities of its individual components:
\begin{equation}
    \mathcal{L}(\mathbf{e}) = \prod_{i=1}^m P(e_i) \propto \prod_{i=1}^m \exp\left( -\frac{e_i^2}{2\sigma^2} \right) = \exp\left( - \sum_{i=1}^m \frac{e_i^2}{2\sigma^2} \right) =\exp\left(-\frac{\|\mathbf{e}\|^2}{2\sigma^2}\right)
\end{equation}

Because the variance $\sigma^2$ is a static independent constant inherent to the LWE parameters, the statistical maximization of the log-likelihood function corresponds identically to the minimization of the residual squared norm: $\arg\max_{\mathbf{e}} \ln \mathcal{L}(\mathbf{e}) \equiv \arg\min_{\mathbf{e}} \|\mathbf{e}\|^2$.
\end{proof}


\section{NISQ Realization: Relaxation to an Approximate Model and Recovery of Exactness}
\label{sec:hardware}

While the mathematical formulations in Section \ref{sec:modeling} provide an exact mapping to a quadratic objective and represent a hardware-independent general model, evaluating them on current Noisy Intermediate-Scale Quantum (NISQ) devices requires overcoming physical precision limits. 
Therefore, we introduce an adaptive encoding strategy to successfully embed the problem onto the CIM. This approach makes a trade-off by introducing a continuous approximation, nevertheless, our hybrid framework provides intrinsic robustness to reconstructing the secret from heuristic hardware outputs.

\subsection{Approximate Encoding, Hardware Embedding and Result Recovery}
\subsubsection{Matrix Conditioning and CR-BNP}
\label{subsec:crbnp}
Instead of modeling the entire space as previous methods did, we partition the lattice into a \textit{frozen subspace} ($\mathcal{F}$) reserved for classical substitution, and an \textit{exploration subspace} ($\mathcal{E}$) exclusively mapped to the CIM. This partitioning strategy significantly reduces the number of variables in the QUBO formulation, making it highly feasible for current hardware capacities and precision limits.

While Section~\ref{sec:modeling} utilizes the standard Babai's Nearest Plane algorithm to provide an approximate closest lattice point, we introduce a modification here. Within the \textit{exploration subspace}, we deliberately defer the discrete rounding operation ($\lfloor \cdot \rceil$). Instead, we allow the modified algorithm to use continuous projection coefficients to sequentially compute the continuous coordinates ($\mathbf{y}_c$). We term this approach \textbf{Continuous Relaxed Babai's Nearest Plane (CR-BNP)}. 

Subsequently, we extract the fractional residuals ($\boldsymbol{\delta} = \mathbf{y}_c - \lfloor \mathbf{y}_c \rceil$) to drive the Adaptive Mixed-Radix Encoding and formulate our QUBO model. Once the discrete assignments for the \textit{exploration subspace} are resolved by the CIM, we apply the traditional discrete Babai's algorithm to determine the remaining coordinates in the \textit{frozen subspace} ($\mathcal{F}$). This continuous-discrete division not only mitigates premature rounding error propagation across the entire space but also explicitly extracts the continuous ambiguities required for the adaptive encoding detailed in the next step.

\subsubsection{Adaptive Mixed-Radix Dimensionality Reduction}
\label{subsec:encoder}
Recognizing that the fractional deviations $\delta_i$ quantify spatial proximity, we devise an \textbf{Adaptive Mixed-Radix Encoder} to bound the offset $\Delta \mathbf{y}$ introduced in Section~\ref{sec:modeling}. Because the CR-BNP projection already centers the search space near the true lattice point, the residual discrete offset is bounded. Thus, we constrain the search domain per dimension to the local integer neighborhood, $\Delta y_i \in \{-1, 0, 1\}$. 

Rather than blindly imposing expensive ternary expansions across all variables, the encoder operates exclusively within the \textit{exploration subspace}. Setting empirical continuous thresholds $\tau_{\text{low}}$ and $\tau_{\text{high}}$, it dynamically allocates bit-widths based on spatial ambiguity: dimensions with low ambiguity ($|\delta_i| < \tau_{\text{low}}$) are fixed as 0-bit constants; dimensions exhibiting strong directional tendencies ($|\delta_i| \ge \tau_{\text{high}}$) are pruned into 1-bit constrained Boolean biases; and only highly localized, ambiguous fractional limits ($\tau_{\text{low}} \le |\delta_i| < \tau_{\text{high}}$) retain a full 2-bit (ternary) encoding. This dynamic pruning systematically performs combinatorial dimensionality reduction, compressing the exponentially large raw configuration space into a highly compact logical domain. It also narrows the dynamic amplitude range of the resulting QUBO coefficients, making the model more amenable to hardware precision limits.

\subsubsection{Hardware Execution and Bounded Correction}
\label{subsec:hardwareexecution}
To maximize the utilization of available physical qubits (e.g., the 550-spin capacity of our target CIM), we employ a variable partitioning strategy. Let $N_{\text{log}}$ denote the total number of logical variables required for the QUBO model. We select a small number of pivot variables, denoted as $K$ (e.g., $K=5$ in Section~\ref{sec:experiments}), subject to the parallel embedding constraint $2^K \times (N_{\text{log}}-K) \le 550$. These $K$ variables are deterministically chosen from the 1-bit assigned dimensions by sorting their continuous fractional ambiguities $|\delta_i|$ in descending order. 

By exhaustively enumerating the $2^K$ possible Boolean assignments for these pivot variables, we inject each assignment as a constant bias into the QUBO formulation. The resulting $2^K$ independent sub-problems, each of size $N_{\text{log}} - K$, are then packed block-diagonally into a sparse global configuration. Prior to mapping, each sub-problem is individually linearly scaled by its maximum absolute coefficient amplitude and uniformly quantized. This strategy enables the entire ensemble to be mapped onto the CIM hardware in a single batched submission. 

Translating the continuous landscape into a discrete QUBO formulation introduces misalignments at integer boundaries. Confined by hardware precision limits, this misalignment can cause continuous pseudo-minima to incur rounding penalties upon forced discretization. To rescue these near-miss configurations upon retrieving the hardware solutions, we apply a classical bounded neighborhood corrector ($R \le 2$) over the full candidate vector to repair small deviations. 

Specifically, the $R=1$ search loops through all indices to test a single $\pm 1$ step mutation, while $R=2$ enumerates every pair of indices for combinatorial $\pm 1$ steps. Whenever the corrector actively mutates these 1 or 2 target dimensions, it deterministically shifts the continuous target projection for the remaining subspace and immediately invokes the discrete Babai's nearest plane algorithm to re-evaluate all preceding indices. Consequently, a single $\pm 1$ step on a trailing exploration coordinate can adaptively repair long propagated mis-roundings in the leading frozen coordinates.

\subsection{Complete Pipeline Summary}
\label{sec:pipeline_summary}
Algorithm~\ref{alg:qcbdd} summarizes the end-to-end pipeline integrating the three preceding components. The five phases proceed sequentially: algebraic isomorphism (Phase 1) and Gram-Schmidt projection (Phase 2) prepare the lattice geometry; CR-BNP (Phase 3, Section~\ref{subsec:crbnp}) extracts the continuous fractional ambiguity; the adaptive mixed-radix encoder (Phase 4, Section~\ref{subsec:encoder}) compresses the search into a logical-bit QUBO, with variable partitioning yielding $2^K$ block-diagonal sub-problems for one CIM submission; finally, post-processing verification and the $R\le 2$ bounded corrector (Phase 5, Section~\ref{subsec:hardwareexecution}) distinguish the true result and recover the secret. We defer the detailed process to Appendix~\ref{sec:appendix_pipeline}.

\begin{algorithm}[]
\caption{CIM-BDD: Systematic Execution Flow}
\label{alg:qcbdd}
\begin{algorithmic}[1]
\REQUIRE LWE instance $(\mathbf{A}, \mathbf{c})$ with $\mathbf{A} \in \mathbb{Z}_q^{m \times n}$, modulus $q$, standard deviation $\sigma$.
\ENSURE The true error vector $\mathbf{e}$ and secret $\mathbf{s}$.

\STATE \textbf{Phase 1: Algebraic Isomorphism \& Global Reduction}
\STATE Randomly select $m'$ rows of $\mathbf{A}$ such that the leading $n\times n$ 
       submatrix $\mathbf{A}_{\text{top}}$ is invertible modulo $q$ 
\STATE Compute transition matrix $\mathbf{T}_{\text{mat}} \leftarrow 
       \mathbf{A}_{\text{bot}}\mathbf{A}_{\text{top}}^{-1} \pmod q$.
\STATE Apply BKZ/LLL~\cite{Lenstra1982} on initial CVP basis to yield 
       integer basis $\mathbf{B}_0$.
\STATE Inherit GSO components: $\mathbf{B}_0 = \mathbf{B}^* \mathbf{U}_\mu$.

\STATE \textbf{Phase 2: Gram-Schmidt Projection}
\STATE Extract the corresponding $m'$-dimensional sub-target from $\mathbf{c}$, lift it to $\mathbf{c}'$, and project $\mathbf{c}'$ onto $\mathbf{B}^*$ to obtain full continuous coefficient vector $\boldsymbol{\alpha}$.
\STATE Isolate exploration subspace: extract $\boldsymbol{\alpha}_{\text{exp}}$ and $\mathbf{U}_{\mu, \text{exp}}$.

\STATE \textbf{Phase 3: Continuous Relaxed Babai's Nearest Plane (CR-BNP)}
\STATE Compute $\mathbf{y}_c$ via $\mathcal{O}(d_{\text{exp}}^2)$ Gaussian back-substitution: $\mathbf{U}_{\mu, \text{exp}} \mathbf{y}_c = \boldsymbol{\alpha}_{\text{exp}}$.

\STATE \textbf{Phase 4: Adaptive Encoding \& Subspace QUBO Gen.}
\STATE Extract continuous fractional ambiguity: $\boldsymbol{\delta} \leftarrow \mathbf{y}_c - \lfloor \mathbf{y}_c \rceil$.
\STATE Map continuous $\boldsymbol{\delta}$ into logical bits using 0/1/2-bit mappings and thresholds $\tau_{\text{low}}, \tau_{\text{high}}$.
\STATE Formulate affine transformation: $\tilde{\mathbf{y}}_{\text{exp}} \leftarrow \mathbf{y}_{\text{base}} + \mathbf{T}_{\text{enc}}\mathbf{z}$.
\STATE Formulate QUBO matrix $E_{\text{proj}}(\mathbf{z})$ purely within $\mathcal{F}^\perp$.
\STATE Apply variable partitioning (exhaustive enumeration of $K$ pivot bits), pack independent instances block-diagonally, and map to CIM.

\STATE \textbf{Phase 5: Verification \& Recovery}
\FOR{each partition branch hardware sample $\mathbf{z}_{\text{hw}}$ extracted from a single CIM submission}
    \STATE Reconstruct physical displacement $\mathbf{y}_{\text{exp}} \leftarrow \mathbf{y}_{\text{base}} + \mathbf{T}_{\text{enc}}\mathbf{z}_{\text{hw}}$.
    \STATE Compute frozen coordinates $\mathbf{y}_{\text{froz}}$ using Discrete Babai's nearest plane algorithm.
    \STATE Let candidate vector $\mathbf{y}_{\text{cand}} = [\mathbf{y}_{\text{froz}}^\top, \mathbf{y}_{\text{exp}}^\top]^\top$.
    \STATE Evaluate global discrete energy $E \leftarrow \|\mathbf{c}' - \mathbf{B}_0 \mathbf{y}_{\text{cand}}\|^2$.
    \IF{$E > T_{\text{early}}$}
        \STATE Perform bounded neighborhood corrector ($R \le 2$) over all coordinates of $\mathbf{y}_{\text{cand}}$, each coordinate correction ($\pm 1$ step) invokes Babai's nearest plane algorithm on front indices. Update $E$ and $\mathbf{y}_{\text{cand}}$.
    \ENDIF
    \IF{$E \le T_{\text{early}}$}
        \STATE \textbf{Output:} Recover $\mathbf{s} \leftarrow (\mathbf{A}_{\text{top}}^{-1} (\mathbf{B}_0 \mathbf{y}_{\text{cand}})_{\text{top}}) \pmod q$.
        \RETURN $\mathbf{s}$, and the global error $\mathbf{e} = (\mathbf{c} - \mathbf{A}\mathbf{s}) \bmod q$
    \ENDIF
\ENDFOR
\end{algorithmic}
\end{algorithm}

\subsection{Sources of Inaccuracy and the Recovery Method}
The pipeline does not solve the LWE instance exactly, it solves the
$d_{\text{exp}}$-dimensional exploration QUBO (near-)exactly. Sources of inexactness include: the heuristic $0/1/2$-bit encoding window omitting the true offset, continuous-integer misalignments altering the QUBO landscape, limited hardware input precision, and frozen-subspace Babai mis-rounding.

The recovery layer is designed to rescue these losses. Each CIM submission returns an ensemble of low-energy configurations at distinct energy levels. Whenever the true point coincides with one of them or lies within the $R\le 2$ neighborhood of one, the bounded corrector restores it and the $T_{\text{early}}$ certificate confirms exactness. Section~\ref{sec:experiments} validates this behavior empirically.

\section{Theoretical Analysis}
\label{sec:theory}

\subsection{Orthogonal Decomposition and the Role of the Frozen Subspace}

We first show that the global squared error splits across the two
subspaces. Let $\mathcal{F}$ denote the frozen subspace handled by classical
substitution and $\mathcal{F}^\perp$ the exploration subspace mapped to QUBO. Any global error vector $\mathbf{e}'$ decomposes uniquely into a component
parallel to $\mathcal{F}$ (denoted $\mathbf{e}_{\text{froz}} =
\pi_{\mathcal{F}}(\mathbf{e}')$) and one parallel to $\mathcal{F}^\perp$ (denoted $\mathbf{e}_{\text{proj}}(\mathbf{z}) = \pi_{\mathcal{F}^\perp}(\mathbf{e}')$, which is explicitly parameterized by the logical variables $\mathbf{z}$):
\begin{equation}
    \mathbf{e}' = \mathbf{e}_{\text{froz}} + \mathbf{e}_{\text{proj}}(\mathbf{z}).
\end{equation}
By orthogonality the squared Euclidean norm separates additively:
\begin{equation}
    \|\mathbf{e}'\|^2 = \|\mathbf{e}_{\text{froz}}\|^2
    + \|\mathbf{e}_{\text{proj}}(\mathbf{z})\|^2. \label{eq:pythagorean}
\end{equation}
Eq.~\eqref{eq:pythagorean} is the structural basis for splitting the work
between hardware and classical post-processing. The second term,
$\|\mathbf{e}_{\text{proj}}(\mathbf{z})\|^2$, is confined to
the exploration subspace, minimizing the QUBO energy $E(\mathbf{z})$ minimizes it over the encoded domain. Conversely, for a
fixed hardware assignment $\mathbf{z}_{\text{opt}}$, minimizing the total error
$\|\mathbf{e}'\|^2$ reduces to minimizing the first term
$\|\mathbf{e}_{\text{froz}}\|^2$, a standard lower-dimensional CVP over the
frozen basis $\mathbf{B}_{\text{froz}}$ that the classical post-processor
resolves by Babai's nearest-plane algorithm.

Eq.~\eqref{eq:pythagorean} holds as a continuous identity, but our modeling imposes discrete integer constraints, which may couple the two subspaces. Therefore,  minimizing the total error does not strictly decouple into independent minimizations of the two terms. This is one source of the inaccuracy of our modeling.

\subsection{Lattice Geometry and the Choice of Subdimension}
\label{subsec:justification}

\begin{lemma}[Length of the Last Gram-Schmidt Vector]
\label{lem:gs_length}
Let $\Lambda$ be an $m'$-dimensional $q$-ary lattice with volume $q^{m'-n}$, corresponding to an LWE instance with secret dimension $n$. Under the Geometric Series Assumption (GSA), if the lattice is reduced by an algorithm with root-Hermite factor $\gamma_H$, the length of the last Gram--Schmidt vector is estimated as:
\begin{equation}
    \|\mathbf{b}_{m'}^*\| \approx q^{1 - \frac{n}{m'}} \cdot {\gamma_H}^{-(m'-1)}.
    \label{eq:tail_gs}
\end{equation}
\end{lemma}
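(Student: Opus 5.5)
The estimate follows from two facts about Gram--Schmidt norms: their product equals the lattice volume, and under GSA they decay geometrically at a rate fixed by the root-Hermite factor. Write $d = m'$ and $\ell_i = \|\mathbf{b}_i^*\|$ for the Gram--Schmidt norms of the reduced basis $\mathbf{B}_0$ of $\Lambda$.

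\textbf{Step 1 (volume).} Because the Gram--Schmidt orthogonalization differs from $\mathbf{B}_0$ by a unit upper-triangular change of basis, $\det(\Lambda) = \prod_{i=1}^{d} \ell_i$. The statement gives $\det(\Lambda) = q^{d-n}$. This is the standard count for the primal $q$-ary lattice: $\Lambda_q(\mathbf{A})$ contains $q\mathbb{Z}^{d}$ with index $q^{n}$ when $\mathbf{A}$ has full rank mod $q$, which holds here because Phase~1 chooses an invertible $\mathbf{A}_{\text{top}}$.

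\textbf{Step 2 (GSA and root-Hermite factor).} In the form used in the lattice-estimation literature, the GSA says $\ell_i = \ell_1 \beta^{-(i-1)}$ for a single ratio $\beta > 1$. The root-Hermite factor is defined by $\ell_1 = \|\mathbf{b}_1\| \approx \gamma_H^{d}\det(\Lambda)^{1/d}$. Substituting the geometric form into Step~1 gives
\[
\det(\Lambda) = \ell_1^{d}\,\beta^{-d(d-1)/2},
\qquad\text{hence}\qquad
\ell_1 = \det(\Lambda)^{1/d}\,\beta^{(d-1)/2}.
\]

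\textbf{Step 3 (identify $\beta$).} Comparing this with the root-Hermite relation gives $\beta^{(d-1)/2} \approx \gamma_H^{d}$. The usual normalization takes $\beta \approx \gamma_H^{2}$, which is exact up to the factor $d/(d-1) \to 1$. Equivalently, the symmetric form of GSA is $\ell_i \approx \det(\Lambda)^{1/d}\,\gamma_H^{\,d-2(i-1)}$ up to lower-order terms in the exponent. I would state this normalization explicitly, since it is exactly where the approximation sign comes from.

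\textbf{Step 4 (last vector).} Take $i = d$:
\[
\ell_d = \ell_1\,\beta^{-(d-1)} \approx \det(\Lambda)^{1/d}\,\beta^{(d-1)/2}\,\beta^{-(d-1)} = \det(\Lambda)^{1/d}\,\beta^{-(d-1)/2}.
\]
With $\beta \approx \gamma_H^{2}$ this becomes $\det(\Lambda)^{1/d}\,\gamma_H^{-(d-1)}$. Since $\det(\Lambda)^{1/d} = q^{(d-n)/d} = q^{1-n/m'}$, this is exactly the claimed estimate~\eqref{eq:tail_gs}.

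\textbf{Main obstacle: the exponent of $\gamma_H$.} Conventions differ on whether it should read $d$, $d-1$ or $(d-1)/2$, and on whether $\gamma_H$ denotes the root-Hermite factor or its square. I would handle this by fixing the GSA slope explicitly as $\ell_{i+1}/\ell_i = \gamma_H^{-2}$. With that choice the product identity yields $\ell_1 = \det^{1/d}\gamma_H^{d-1}$ and $\ell_d = \det^{1/d}\gamma_H^{-(d-1)}$ exactly. I would then note that this agrees with the usual definition $\ell_1 \approx \gamma_H^{d}\det^{1/d}$ up to a factor $\gamma_H$, which the "$\approx$" absorbs.

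A secondary caveat is that for $q$-ary lattices reduced only mildly, the first Gram--Schmidt vectors can be the trivial $q$-vectors, in which case GSA breaks down (the "Z-shape"). I would mention this as a regime assumption: the lemma is stated under GSA, so it is not proved.
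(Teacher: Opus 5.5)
Your proposal is correct and follows essentially the same route as the paper. Both fix the GSA slope $\|\mathbf{b}_{i+1}^*\|/\|\mathbf{b}_i^*\| = \gamma_H^{-2}$, equate $\prod_i \|\mathbf{b}_i^*\|$ with $q^{m'-n}$ to get $\|\mathbf{b}_1^*\| \approx q^{1-n/m'}\gamma_H^{m'-1}$, and then take $i=m'$. Your additional remarks are accurate extras beyond what the paper states, since it simply takes $\rho=\gamma_H^{-2}$ as given. These are the index-$q^n$ justification of the volume, the factor-$\gamma_H$ discrepancy with the convention $\|\mathbf{b}_1\|\approx\gamma_H^{m'}\det(\Lambda)^{1/m'}$, and the Z-shape caveat.
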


\begin{proof}
Under the GSA heuristic, the lengths of the Gram-Schmidt vectors decay geometrically, i.e.,
$\|\mathbf{b}_i^*\| \approx \rho^{\,i-1}\|\mathbf{b}_1^*\|$ with $\rho = {\gamma_H}^{-2}$.
Since the lattice volume equals the product of these lengths, we have:
\begin{equation*}
    \operatorname{Vol}(\Lambda) = \prod_{i=1}^{m'} \|\mathbf{b}_i^*\| \approx \|\mathbf{b}_1^*\|^{m'} \rho^{m'(m'-1)/2} = q^{m'-n}.
\end{equation*}
Solving for the first vector yields $\|\mathbf{b}_1^*\| \approx q^{1-n/m'} \rho^{-(m'-1)/2}$. Substituting into
$\|\mathbf{b}_{m'}^*\| \approx \rho^{\,m'-1}\|\mathbf{b}_1^*\|$ and using
$\rho^{1/2}={\gamma_H}^{-1}$ gives Eq.~\eqref{eq:tail_gs}.
\end{proof}

Setting the derivative of Eq.~\eqref{eq:tail_gs} to zero yields an idealized
GSA-optimal subdimension of $m'=152$ for our instance
(Figure~\ref{fig:gs_length}). We chose the sub-optimal point $m'=88$, which actually is a harder operating point, mainly for \emph{demonstration and stress-testing}, not to maximize the recovery rate. A real attacker would tune toward larger
$m'$ with a correspondingly stronger block size $\beta$; our choice of $m'=88$ is a conservative operating point that exposes the
failure-and-recovery behaviour of the pipeline, with a shorter time of BKZ reduction because larger $m'$ slows the BKZ, which will be a problem when considering larger instances.

\begin{figure}[h]
\centering
\includegraphics[scale=0.3]{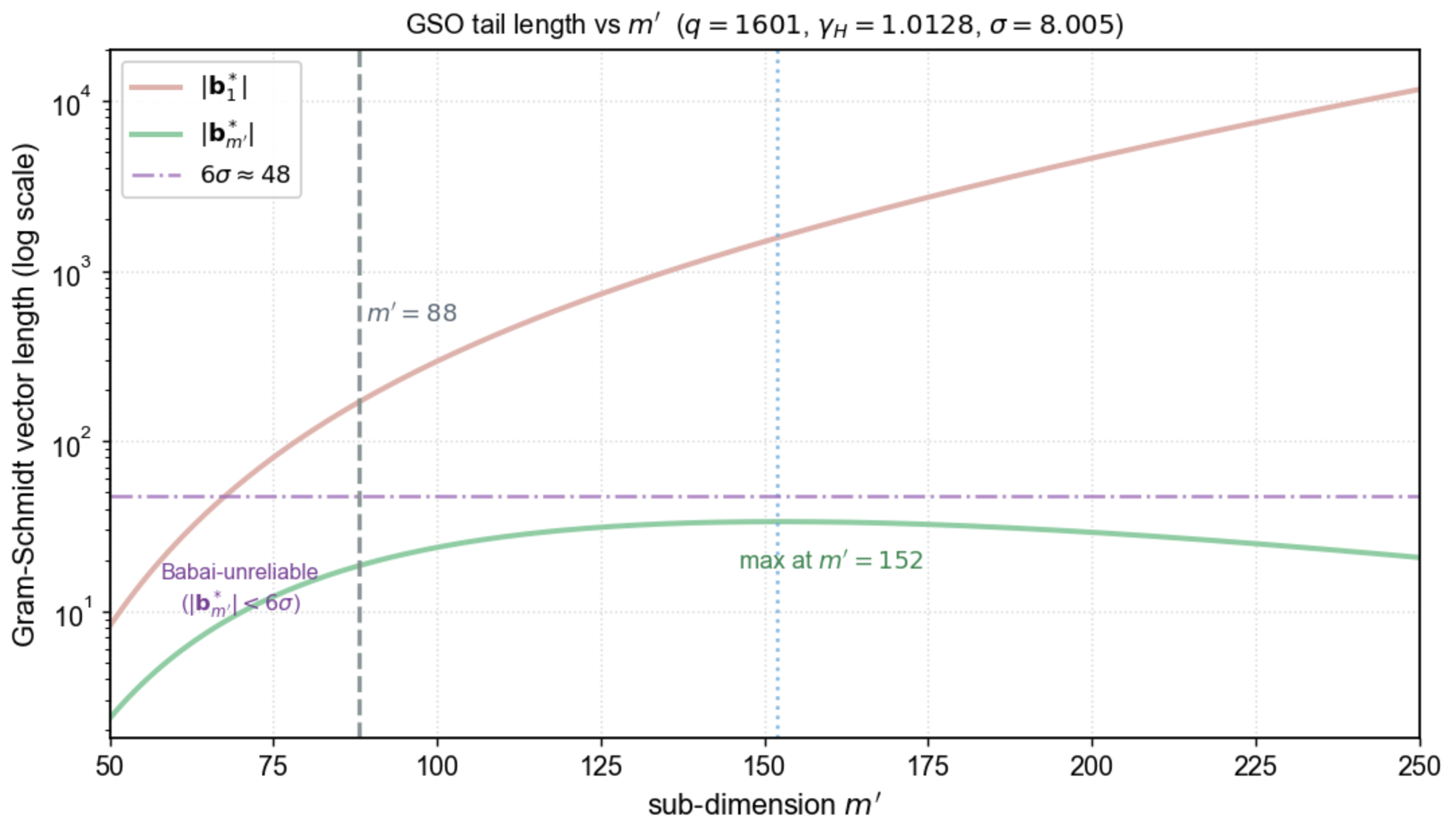}
\caption{Trend of the theoretical \( \|\mathbf{b}_1^*\| \) and \( \|\mathbf{b}_{m'}^*\| \) as functions of sub-dimension \( m' \) for the \( n = 40 \) instance (\( q = 1601, \gamma_H = 1.0128 \ (\text{BKZ-20}), \sigma = 8.005 \)). The dash-dotted purple line at \( 6\sigma \approx 48 \) is the Babai hard-decision failure region threshold, the grey vertical line at \( m' = 88 \) marks the point used in the experiments, the analytic maximum of \( \|\mathbf{b}_{m'}^*\| \) occurs at \( m' = 152 \).}
\label{fig:gs_length}
\end{figure}

\subsection{Decoding Limits and Exploration Subspace Reliability}
\label{subsubsec:decoding_limit}

To quantify the decoding limits exclusively within the hardware-mapped exploration subspace, we first evaluate the continuous coordinate error at the last dimension using empirical values instead of the theoretical ones in Section~\ref{subsec:justification}.
 Because its coordinate error avoids any accumulated covariance from other variables, evaluating this dimension provides a clean baseline for our estimation.

\begin{lemma}[Standard Deviation of the Continuous Coordinate Error at the Last Dimension]
\label{lem:variance_last}
In the CR-BNP framework, let the continuous coordinates $\mathbf{y}_c$ solve $\mathbf{U}_{\mu,\text{exp}}\mathbf{y}_c = \boldsymbol{\alpha}_{\text{exp}}$ with the true error $\mathbf{e}\sim \mathcal{D}_{\mathbb{Z}^{m'},\sigma}$. The standard deviation of the coordinate error at the last dimension $m'$ evaluates to:
\begin{equation}
    \sigma_{\text{last}} = \frac{\sigma}{\|\mathbf{b}_{m'}^*\|}.
\end{equation}
\end{lemma}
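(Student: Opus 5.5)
We work in the Gram--Schmidt coordinate system used by CR-BNP. The plan is to show that the continuous coordinate error at index $m'$ is a single Gram--Schmidt projection of the error vector, rescaled by $1/\|\mathbf{b}_{m'}^*\|$, and then to compute its variance. Write the lifted target as $\mathbf{c}' = \mathbf{B}_0\mathbf{y}_{\text{true}} + \mathbf{e}$, with $\mathbf{y}_{\text{true}}$ the integer coefficient vector of the planted lattice point. Using $\mathbf{B}_0 = \mathbf{B}^*\mathbf{U}_\mu$ from Phase~1, the projection coefficients in Phase~2 are
\[
\alpha_i = \frac{\langle \mathbf{c}', \mathbf{b}_i^*\rangle}{\|\mathbf{b}_i^*\|^2}.
\]
By linearity this gives $\boldsymbol{\alpha} = \mathbf{U}_\mu \mathbf{y}_{\text{true}} + \boldsymbol{\varepsilon}$, where
\[
\varepsilon_i = \frac{\langle \mathbf{e}, \mathbf{b}_i^*\rangle}{\|\mathbf{b}_i^*\|^2}.
\]
The first step is to restrict this identity to the exploration block. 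Since $\mathbf{U}_{\mu,\text{exp}}$ is unit upper triangular, solving $\mathbf{U}_{\mu,\text{exp}}\mathbf{y}_c = \boldsymbol{\alpha}_{\text{exp}}$ gives $\mathbf{y}_c - \mathbf{y}_{\text{true},\text{exp}} = \mathbf{U}_{\mu,\text{exp}}^{-1}\boldsymbol{\varepsilon}_{\text{exp}}$.

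The key structural observation is that the last row of $\mathbf{U}_{\mu,\text{exp}}^{-1}$ is the unit vector $\mathbf{e}_{m'}^\top$. Back-substitution starts at the last index, so $y_{c,m'} = \alpha_{m'}$ with no contribution from the other coordinates. The coordinate error at index $m'$ is therefore exactly $\varepsilon_{m'} = \langle \mathbf{e}, \mathbf{b}_{m'}^*\rangle / \|\mathbf{b}_{m'}^*\|^2$. This is the sense in which the statement says this coordinate ``avoids any accumulated covariance''.

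Next we compute the variance. Let $\mathbf{u} = \mathbf{b}_{m'}^*/\|\mathbf{b}_{m'}^*\|$ be the corresponding unit vector, so that $\varepsilon_{m'} = \langle \mathbf{e},\mathbf{u}\rangle/\|\mathbf{b}_{m'}^*\|$. Treat the entries of $\mathbf{e}$ as i.i.d.\ with mean $0$ and variance $\sigma^2$, the same modelling assumption used in Remark~\ref{rem:mle}. Then $\operatorname{Cov}(\mathbf{e}) = \sigma^2 \mathbf{I}$, and $\operatorname{Var}\langle \mathbf{e},\mathbf{u}\rangle = \sigma^2\|\mathbf{u}\|^2 = \sigma^2$. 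Dividing by $\|\mathbf{b}_{m'}^*\|^2$ yields $\operatorname{Var}(\varepsilon_{m'}) = \sigma^2/\|\mathbf{b}_{m'}^*\|^2$, and taking the square root gives $\sigma_{\text{last}}$. Because $\mathbf{B}_0$ and its Gram--Schmidt vectors are fixed by the reduction, independently of $\mathbf{e}$, treating $\mathbf{u}$ as deterministic is legitimate.

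The main obstacle is a modelling issue rather than the algebra: the discrete Gaussian $\mathcal{D}_{\mathbb{Z},\sigma}$ does not have variance exactly $\sigma^2$. We will address this by noting that for $\sigma \gg \eta_\epsilon(\mathbb{Z})$ the discrepancy is negligible; here $\sigma\approx 8$, so the statement holds to within exponentially small relative error. Equivalently, one may adopt the standard convention that $\sigma$ denotes the actual standard deviation of the sampler, which is how the LWE definition in the paper phrases it.

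A second point to check is that the lift of $\mathbf{c}$ to $\mathbf{c}'$ in Phase~2 produces an integer representative whose difference from a lattice point is exactly $\mathbf{e}$, and not $\mathbf{e}$ plus a multiple of $q$. This holds because $q\mathbb{Z}^{m'}\subseteq\Lambda_q$, so any such multiple is absorbed into $\mathbf{y}_{\text{true}}$.
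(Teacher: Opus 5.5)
Your proof is correct and matches the paper's argument step for step. Like the paper, you write $\boldsymbol{\alpha}_{\text{exp}} = \mathbf{U}_{\mu,\text{exp}}\mathbf{y}_{\text{true},\text{exp}} + \boldsymbol{\varepsilon}_{\text{exp}}$, use that the last row of the unit upper-triangular inverse is a standard basis vector to get $\varepsilon_{m'} = \langle \mathbf{e},\mathbf{b}_{m'}^*\rangle/\|\mathbf{b}_{m'}^*\|^2$, and compute its variance under $\operatorname{Cov}(\mathbf{e}) = \sigma^2\mathbf{I}$. Your extra remarks (the reduced basis being independent of $\mathbf{e}$, the discrete-Gaussian variance being only approximately $\sigma^2$, and the lift absorbing multiples of $q$) are sound refinements that the paper leaves implicit.
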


\begin{proof}
The continuous projection coefficient is
$\alpha_i = \langle \mathbf{c}', \mathbf{b}_i^*\rangle / \|\mathbf{b}_i^*\|^2$.
Writing $\mathbf{c}' = \mathbf{B}^*\mathbf{U}_\mu\mathbf{y}_{\text{true}} + \mathbf{e}$
gives, on the exploration block,
$\boldsymbol{\alpha}_{\text{exp}} = \mathbf{U}_{\mu,\text{exp}}\mathbf{y}_{\text{true},\text{exp}} + \mathbf{e}_{\text{proj},\text{exp}}$
with $e_{\text{proj},i} = \langle \mathbf{e}, \mathbf{b}_i^*\rangle / \|\mathbf{b}_i^*\|^2$.
Hence the coordinate error is
$\boldsymbol{\epsilon}_y = \mathbf{y}_{c,\text{exp}} - \mathbf{y}_{\text{true},\text{exp}}
= \mathbf{U}_{\mu,\text{exp}}^{-1}\mathbf{e}_{\text{proj},\text{exp}}$. Since
$\mathbf{U}_{\mu,\text{exp}}$ is upper unitriangular, so is its inverse, whose
last row is a single $1$ on the diagonal, thus
$\epsilon_{y,m'} = e_{\text{proj},m'} = \langle \mathbf{e}, \mathbf{b}_{m'}^*\rangle / \|\mathbf{b}_{m'}^*\|^2$.
Projecting $\mathbf{e}\sim \mathcal{D}_{\mathbb{Z}^{m'},\sigma}$ onto the
fixed vector $\mathbf{b}_{m'}^*$ gives
\begin{equation}
    \operatorname{Var}(\epsilon_{y,m'})
    = \frac{\operatorname{Var}(\langle \mathbf{e}, \mathbf{b}_{m'}^*\rangle)}{\|\mathbf{b}_{m'}^*\|^4}
    = \frac{\sigma^2\|\mathbf{b}_{m'}^*\|^2}{\|\mathbf{b}_{m'}^*\|^4}
    = \frac{\sigma^2}{\|\mathbf{b}_{m'}^*\|^2},
\end{equation}
and $\sigma_{\text{last}} = \sigma/\|\mathbf{b}_{m'}^*\|$ follows.
\end{proof}

We now estimate how ternary widening removes the per-coordinate bottleneck in the exploration subspace.

\begin{claim}[Last Coordinate Reliability via Ternary Widening]
\label{thm:qubo_reliability_single}
Restricting our analysis to the exploration subspace, we take a specific randomized row-selection seed for the $n=40$ configuration ($m'=88$) as a representative empirical example. Here, the Gram-Schmidt tail degrades to $\|\mathbf{b}_{88}^*\| \approx 15.531$ (slightly below the idealized GSA prediction). By Lemma~\ref{lem:variance_last}, the continuous coordinate error at this last dimension exhibits a standard deviation $\sigma_{\text{last}} \approx 8.005/15.531 \approx 0.515$. 

In this exploration region, a classical single-pass nearest-plane rounding succeeds at this last coordinate only if the fractional error lies in $[-0.5, 0.5]$, corresponding to $\Pr(|Z| \le 0.5/0.515) = 2\Phi(0.971) - 1 \approx 0.668$. 

Assuming the last coordinate is assigned a full ternary QUBO encoding ($\Delta y_{88} \in \{-1,0,1\}$), the interval widens from $\pm 0.5$ to $\pm 1.5$. This expansion restores the per-coordinate reliability at the tail:
\[
    p_{\text{last}} = \Pr\!\left(|Z| \le \tfrac{1.5}{0.515}\right) = 2\Phi(2.913) - 1 \approx 0.9964.
\]
\end{claim}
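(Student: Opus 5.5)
The plan is to treat the claim as a direct application of Lemma~\ref{lem:variance_last}, combined with a Gaussian approximation of the projected error, and then to evaluate two interval probabilities. First, I would fix the empirical tail length $\|\mathbf{b}_{88}^*\|\approx 15.531$ produced by the chosen seed. Lemma~\ref{lem:variance_last} then gives $\sigma_{\text{last}}=\sigma/\|\mathbf{b}_{88}^*\|\approx 8.005/15.531\approx 0.515$. The only modelling step needed is to justify treating $\epsilon_{y,88}=\langle\mathbf{e},\mathbf{b}_{88}^*\rangle/\|\mathbf{b}_{88}^*\|^2$ as approximately $\mathcal{N}(0,\sigma_{\text{last}}^2)$. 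The argument is that it is a fixed linear combination of $m'=88$ independent discrete Gaussians with $\sigma\approx 8\gg 1$. Each discrete Gaussian is then close to its continuous counterpart, and the linear combination is in any case near-Gaussian by a Lindeberg-type central limit argument, since no single coordinate of $\mathbf{b}_{88}^*$ dominates its norm. Its mean is $0$ by symmetry, and its variance is exactly the one computed in Lemma~\ref{lem:variance_last}.

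Second, I would make the event for single-pass rounding at the last coordinate explicit. Since the last row of $\mathbf{U}_{\mu,\text{exp}}^{-1}$ is the unit vector, $y_{c,88}=y_{\text{true},88}+\epsilon_{y,88}$ with no propagated contributions. Rounding $\lfloor y_{c,88}\rceil$ therefore returns $y_{\text{true},88}$ exactly when $|\epsilon_{y,88}|\le 0.5$, with boundary ties having measure zero. Standardising gives $\Pr(|Z|\le 0.5/0.515)=2\Phi(0.971)-1\approx 0.668$.

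Third, for the ternary encoding I would note that the search set $\lfloor y_{c,88}\rceil+\{-1,0,1\}$ contains $y_{\text{true},88}$ whenever the rounding error $y_{\text{true},88}-\lfloor y_{c,88}\rceil$ lies in $\{-1,0,1\}$. A sufficient condition is $|\epsilon_{y,88}|\le 1.5$, because $|y_{\text{true},88}-\lfloor y_{c,88}\rceil|\le|\epsilon_{y,88}|+0.5$ and the left-hand side is an integer. The condition is also essentially necessary, since $|\epsilon_{y,88}|>1.5$ forces a rounding error of at least $2$. This yields $p_{\text{last}}=2\Phi(1.5/0.515)-1=2\Phi(2.913)-1\approx 0.9964$, which completes the numerical evaluation.

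The main obstacle is rigour in the Gaussian approximation rather than the arithmetic. The error is discrete and its projection is only approximately normal, and the claim also fixes a single seed, so $\|\mathbf{b}_{88}^*\|$ is an observed constant and not a random variable. I would handle this in two ways. First, state explicitly that the reliability is conditional on the reduced basis, so that $\mathbf{b}_{88}^*$ is deterministic and independent of $\mathbf{e}$. Second, invoke the standard smoothing fact that for $\sigma\ge 8$ the discrete Gaussian's variance and tails agree with the continuous ones up to negligible error, $e^{-\Omega(\sigma^2)}$. A Berry--Esseen bound over the $88$ summands then controls the remaining normal-approximation error well below the third decimal quoted.
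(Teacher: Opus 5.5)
Your main line is the paper's own argument. Lemma~\ref{lem:variance_last} with the observed $\|\mathbf{b}_{88}^*\|\approx 15.531$ gives $\sigma_{\text{last}}\approx 0.515$. The last-coordinate error $\epsilon_{y,88}$ is then modelled as $\mathcal{N}(0,\sigma_{\text{last}}^2)$, and the windows $\pm 0.5$ and $\pm 1.5$ are evaluated. The paper does nothing beyond this arithmetic; normality is the unstated assumption behind its $Z$. Your explicit rounding events are sound additions. So is your observation that the reduced basis depends only on $\mathbf{A}'$ and is therefore independent of $\mathbf{e}$.

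Where the proposal goes wrong is the rigour layer you add for what you call the main obstacle. A Berry--Esseen bound cannot bring the normal-approximation error below the third decimal. Write $\epsilon_{y,88}=\sum_i b_i e_i/\|\mathbf{b}_{88}^*\|^2$ with $b_i$ the entries of $\mathbf{b}_{88}^*$. The bound is $C_0\,(\mathbb{E}|e_1|^3/\sigma^3)\sum_i |b_i|^3/\|\mathbf{b}_{88}^*\|^3$. Since $\sum_i|b_i|^3\ge \|\mathbf{b}_{88}^*\|^3/\sqrt{88}$, with $C_0\approx 0.56$ and $\mathbb{E}|e_1|^3/\sigma^3\approx 1.6$ the bound is at least about $0.1$, however evenly $\mathbf{b}_{88}^*$ is spread. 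That exceeds the tail mass $1-p_{\text{last}}\approx 0.0036$ you want to certify by more than an order of magnitude.

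A third-moment CLT rate cannot see what actually makes the normal model accurate here: every summand is already almost exactly Gaussian at $\sigma\approx 8$. A rigorous version must exploit this directly. One route is a smoothing-parameter argument for the image of $\mathcal{D}_{\mathbb{Z}^{88},\sigma}$ under $\mathbf{x}\mapsto\langle\mathbf{x},\mathbf{b}_{88}^*\rangle$, which also makes your unverified ``no coordinate dominates'' hypothesis unnecessary. Otherwise, state normality as a heuristic, as the paper does.

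A minor slip: your containment argument needs $|\epsilon_{y,88}|<1.5$ strictly. With $|\epsilon_{y,88}|\le 1.5$ you only bound the integer rounding error by $2$. The boundary is a null event, so $p_{\text{last}}$ is unaffected.
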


\begin{claim}[Cumulative Coverage of the Exploration Subspace]
\label{thm:qubo_coverage_cumulative}
To estimate the cumulative coverage within the $d_{\text{exp}}=12$ exploration subspace, we formulate a probabilistic projection under three assumptions: (1) per-coordinate statistical independence; (2) the last-dimension probability $p_{\text{last}}$ uniformly applies across all exploration dimensions; and (3) every coordinate is universally granted a full ternary encoding. Under these conditions, the theoretical cumulative coverage within the exploration subspace evaluates to:
\[
    P_{\text{explore}} \approx (p_{\text{last}})^{12} \approx (0.9964)^{12} \approx 0.957.
\]
\end{claim}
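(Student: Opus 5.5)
The claim follows once the three stated assumptions are turned into a product of events, with the per-coordinate probability supplied by Claim~\ref{thm:qubo_reliability_single}.

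\textbf{Step 1: express coverage as an intersection of events.} For each exploration index $i$ in the $d_{\text{exp}}=12$ block, let $G_i$ be the event that the true integer offset lies in the encoding window. With a full ternary encoding around $\lfloor y_{c,i}\rceil$, this window is $\Delta y_i\in\{-1,0,1\}$ by assumption (3). It suffices that the continuous coordinate error satisfy $|\epsilon_{y,i}|\le 1.5$. Coverage of the exploration subspace is then exactly the event $\bigcap_{i} G_i$, since the QUBO domain is the Cartesian product of the per-coordinate windows.

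\textbf{Step 2: bound each factor.} By assumption (2), each $\epsilon_{y,i}$ is treated as having the tail law of Lemma~\ref{lem:variance_last}: a centered normal approximation with standard deviation $\sigma_{\text{last}}\approx 0.515$. The justification for the normal law is that it is a projection of the discrete Gaussian $\mathbf{e}$ onto a fixed direction, a sum of many independent terms. Hence $\Pr(G_i)=p_{\text{last}}=2\Phi(1.5/0.515)-1\approx 0.9964$, as computed in Claim~\ref{thm:qubo_reliability_single}.

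\textbf{Step 3: multiply.} By assumption (1), independence gives $\Pr(\bigcap_i G_i)=\prod_i\Pr(G_i)=p_{\text{last}}^{12}$. Evaluating gives $12\ln 0.9964\approx -0.0433$, hence $\approx 0.958$. This agrees with the stated $0.957$ up to the rounding of $p_{\text{last}}$; using the more precise $2\Phi(2.913)-1\approx 0.99642$ confirms $\approx 0.957$.

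\textbf{Main obstacle.} The delicate point is not the arithmetic but the honesty of assumptions (1) and (2). In reality $\boldsymbol{\epsilon}_y=\mathbf{U}_{\mu,\text{exp}}^{-1}\mathbf{e}_{\text{proj},\text{exp}}$ has correlated components, and the earlier coordinates carry accumulated variance exceeding $\sigma_{\text{last}}^2$. The claim therefore should be presented as an idealized projection rather than a rigorous bound. If any rigor is added, I would first note that the $e_{\text{proj},i}$ along orthogonal Gram--Schmidt directions are genuinely independent under a continuous Gaussian approximation of $\mathbf{e}$. The coupling enters only through $\mathbf{U}_{\mu,\text{exp}}^{-1}$, so the independence assumption amounts to ignoring the off-diagonal $\mu_{ij}$ terms.
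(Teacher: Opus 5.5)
Your argument is the paper's own: under assumptions (1)--(3) the coverage event is the intersection of twelve ternary-window events, each with probability $p_{\text{last}}\approx 0.9964$ from Claim~\ref{thm:qubo_reliability_single}, so independence gives $p_{\text{last}}^{12}$, and your caveats about correlation through $\mathbf{U}_{\mu,\text{exp}}^{-1}$ and unequal variances match the note that follows the claim. One small arithmetic slip: a larger $p_{\text{last}}$ raises the product, so $0.99642^{12}\approx 0.9579$ moves away from $0.957$ rather than ``confirming'' it; the paper's $0.957$ is just a truncation of $(0.9964)^{12}\approx 0.9576$, which does not matter for an approximate estimate.
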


Note that $P_{\text{explore}} \approx 95.7\%$ serves as an estimation for the actual exploration subspace coverage, because actually the coordinates are not statistically independent, the standard deviations for dimensions in the exploration space are not equal to $\sigma_{\text{last}}$ generally, and our adaptive encoder does not apply full ternary widths to all dimensions.

\subsection{Early-Stopping Certificate and DLWE Distinguishing Advantage}
\label{subsec:early_stopping}

We establish a mathematical early-stopping threshold $T_{\text{early}}$ that functions as a one-sided cryptographic certificate. We formulate a distinguisher $\mathcal{D}$ that evaluates the residual energy of the recovered lattice point to bound the Decision-LWE advantage.

\begin{theorem}[One-Sided Distinguisher and DLWE Advantage]
\label{thm:early_stopping}
Let $\mathcal{D}$ be a Decision-LWE distinguisher that executes the encoded CIM-BDD solver. On input challenge $(\mathbf{A}, \mathbf{c})$, if the solver recovers a candidate vector $\mathbf{y}_{\text{cand}}$ satisfying the residual energy bound $E = \|\mathbf{c}'-\mathbf{B}_0\mathbf{y}_{\text{cand}}\|^2 \le T_{\text{early}}$, the distinguisher outputs $1$ (guessing \emph{LWE}). Otherwise, it defaults to outputting $0$ (guessing \emph{Uniform}). The threshold is analytically defined as:
\begin{equation}
    T_{\text{early}} = m'\sigma^2 + 4\sigma^2\sqrt{2m'}. \label{eq:tearly}
\end{equation}
Let $p_{\text{succ}}$ denote the solver's operational probability of correctly locating the true LWE error vector. The standard distinguishing advantage $\mathrm{Adv}^{\mathrm{DLWE}}(\mathcal{D}) = |\Pr[\mathcal{D}(\mathrm{LWE}) = 1] - \Pr[\mathcal{D}(\mathrm{Uniform}) = 1]|$ is lower-bounded by:
\begin{equation}
    \mathrm{Adv}^{\mathrm{DLWE}}(\mathcal{D}) \ge p_{\text{succ}} \cdot (1 - \alpha_{\text{tail}}) - \epsilon_{\text{vol}},
\end{equation}
where $\alpha_{\text{tail}}$ is the statistical right-tail probability of a $\chi^2_{m'}$ distribution evaluated at $m'+4\sqrt{2m'}$, and $\epsilon_{\text{vol}} = \operatorname{Vol}\!\big(\mathcal{B}_{m'}(\sqrt{T_{\text{early}}})\big)/\det(\Lambda_q)$ is the volumetric false-acceptance bound derived from the lattice Gaussian Heuristic.
\end{theorem}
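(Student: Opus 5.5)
We split the advantage into its two terms and bound each separately. We will show
\[
\Pr[\mathcal{D}(\mathrm{LWE})=1]\ge p_{\text{succ}}(1-\alpha_{\text{tail}})
\quad\text{and}\quad
\Pr[\mathcal{D}(\mathrm{Uniform})=1]\le \epsilon_{\text{vol}}.
\]
The claimed bound then follows from $\mathrm{Adv}\ge \Pr[\mathcal{D}(\mathrm{LWE})=1]-\Pr[\mathcal{D}(\mathrm{Uniform})=1]$.

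\emph{LWE side (completeness).} Suppose the solver locates the true error, which happens with probability $p_{\text{succ}}$. Then $\mathbf{c}'-\mathbf{B}_0\mathbf{y}_{\text{cand}}=\mathbf{e}$ on the $m'$ selected coordinates, so $E=\|\mathbf{e}\|^2$.
\begin{itemize}
\item Treat the components $e_i$ as i.i.d.\ with variance $\sigma^2$, in the spirit of Remark~\ref{rem:mle}, and approximate the discrete Gaussian by a continuous one. Then $\|\mathbf{e}\|^2/\sigma^2\sim\chi^2_{m'}$, with mean $m'$ and variance $2m'$.
\item The event $E\le T_{\text{early}}$ is exactly $\chi^2_{m'}\le m'+4\sqrt{2m'}$, i.e.\ four standard deviations above the mean. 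By definition of $\alpha_{\text{tail}}$, this has probability $1-\alpha_{\text{tail}}$.
\item The error realisation and the solver's success are not literally independent. We can either assume independence as a heuristic, or lower-bound $\Pr[\text{succ}\wedge E\le T]$ by $p_{\text{succ}}-\alpha_{\text{tail}}$.
\end{itemize}
We will present the product form under the independence heuristic and note the union-bound variant. If desired, $\alpha_{\text{tail}}$ can be bounded explicitly with the Laurent--Massart inequality $\Pr[\chi^2_k\ge k+2\sqrt{kx}+2x]\le e^{-x}$.

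\emph{Uniform side (soundness).} When $\mathbf{c}$ is uniform, $\mathbf{c}'$ is uniform modulo $q$, and hence uniform over a fundamental domain of $\Lambda_q$ restricted to $m'$ coordinates. The distinguisher accepts only if some lattice point lies within distance $\sqrt{T_{\text{early}}}$ of $\mathbf{c}'$, i.e.\ only if $\mathbf{c}'\in\Lambda+\mathcal{B}_{m'}(\sqrt{T_{\text{early}}})$. The solver's internal choices cannot raise this probability, since acceptance requires an actual lattice point in the ball.
\begin{itemize}
\item By a union bound over lattice translates, the density of this union within a fundamental domain is at most $\operatorname{Vol}(\mathcal{B}_{m'}(\sqrt{T}))/\det(\Lambda_q)$.
\item This quantity equals $\epsilon_{\text{vol}}$. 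The Gaussian Heuristic is needed only to justify that this is tight and small when $\sqrt{T}$ is below $\lambda_1$.
\end{itemize}

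The main obstacle is making the LWE-side step rigorous. Two points need care. First, the discrete Gaussian only approximately gives a $\chi^2$ law: its variance differs slightly from $\sigma^2$, and smoothing-parameter arguments could control this. Second, the solver's success is correlated with the error magnitude. We will state both as the heuristic assumptions already implicit in the paper's $\chi^2$ framing, and keep the proof at that level of rigor.
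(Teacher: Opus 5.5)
Your proposal follows the paper's proof: you split the advantage the same way, bound the LWE side by a four-sigma $\chi^2_{m'}$ tail multiplied by $p_{\text{succ}}$, and bound the uniform side by $\operatorname{Vol}\big(\mathcal{B}_{m'}(\sqrt{T_{\text{early}}})\big)/\det(\Lambda_q)$. Your averaging-over-the-uniform-target argument and your explicit independence caveat, with the rigorous fallback $p_{\text{succ}}-\alpha_{\text{tail}}$, are more careful versions of steps the paper states heuristically. The paper instead invokes the Gaussian Heuristic and says only that the product form is ``well approximated.'' One small point: since $\mathbf{c}'$ is an integer vector, your averaging argument rigorously bounds by $|\mathbb{Z}^{m'}\cap\mathcal{B}_{m'}(\sqrt{T_{\text{early}}})|/q^{m'-n}$ rather than exactly $\epsilon_{\text{vol}}$, a difference that is negligible at radius about $95$ in dimension $88$.
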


\begin{proof}
\emph{(i) Evaluation on True LWE Instances ($\Pr[\mathcal{D}(\mathrm{LWE}) = 1]$):} 
Normalizing $X_i=e_i/\sigma$, the true-point energy is
$E=\|\mathbf{e}\|^2=\sigma^2\sum_{i=1}^{m'}X_i^2$, with
$\sum_i X_i^2\sim\chi^2_{m'}$ (mean $m'$, variance $2m'$). Hence
$\mathbb{E}[E]=m'\sigma^2$ and $\operatorname{Var}(E)=2m'\sigma^4$. Placing the
threshold four standard deviations above the mean gives
Eq.~\eqref{eq:tearly}, i.e.\ $T_{\text{early}}/\sigma^2=m'+4\sqrt{2m'}$. The mathematical probability that the true planted error stochastically violates this generous threshold is bounded by the analytic tail $\alpha_{\text{tail}} = \Pr[\chi^2_{m'} > m'+4\sqrt{2m'}]$. Since the algorithm empirically locates this true lattice point with probability $p_{\text{succ}}$, the joint probability of outputting $1$ is well approximated by $p_{\text{succ}} \cdot (1 - \alpha_{\text{tail}})$.

\emph{(ii) Evaluation on Uniform Instances ($\Pr[\mathcal{D}(\mathrm{Uniform}) = 1]$):} 
A uniform random target $\mathbf{u} \sim \mathcal{U}(\mathbb{Z}_q^{m'})$ contains no planted error. The distinguisher outputs $1$ only if a wrong lattice point accidentally falls within the Euclidean ball $\mathcal{B}_{m'}(\sqrt{T_{\text{early}}})$ centered at the target. By the Gaussian Heuristic, the expected number of such lattice points asymptotically upper-bounds this probability: $\Pr[\mathcal{D}(\text{Uniform}) = 1] \le \epsilon_{\text{vol}} = \operatorname{Vol}(\mathcal{B}_{m'}(\sqrt{T_{\text{early}}}))/\det(\Lambda_q)$.

\emph{(iii) Distinguishing Advantage:} 
Substituting these bounds yields
\begin{align*}
    \mathrm{Adv}^{\mathrm{DLWE}}(\mathcal{D}) &= \Pr[\mathcal{D}(\text{LWE}) = 1] - \Pr[\mathcal{D}(\text{Uniform}) = 1] \\
    &\ge p_{\text{succ}} \cdot (1 - \alpha_{\text{tail}}) - \epsilon_{\text{vol}}.
\end{align*}
\end{proof}

This early-stopping threshold $T_{\text{early}}$ serves as a one-sided statistical certificate: crossing below $T_{\text{early}}$ certifies
the secret, whereas failing to cross is uninformative as to whether the target
was LWE or uniform. It gives a search-to-decision distinguisher, whose advantage approximately equals the search success probability. For Search-LWE, the certificate accelerates verification that could also be done by recomputing $\mathbf{s}$;
for Decision-LWE, a uniform instance has \emph{no} secret to recompute, so the threshold-crossing
event is the only available distinguisher. 


\section{Experiments and Evaluation}
\label{sec:experiments}

All experiments are conducted on CPQC-550 at the Coherent Ising Machine
platform through the Kaiwu SDK~\cite{kaiwu_sdk_2022}, with all instances drawn
from the TU Darmstadt LWE Challenge~\cite{lattice_challenge}. 

\subsection{Toy Instance Validation}
\label{sec:n5_profiler}
Although the toy instances could be solved directly using the modeling in Section~\ref{sec:modeling}, to validate the pipeline presented in Section~\ref{sec:hardware} --- the CR-BNP
projection, the adaptive mixed-radix encoder, the block-diagonal variable
partitioning, and the bounded corrector ($R\le 2$) --- we route the $n=5$ instance
($m'=12$, $q=29$, $\sigma\approx 0.29$) through every stage with an exploration
subspace of $d_{\text{exp}}=8$. Setting $\tau_{\text{low}}=0.05$ and $\tau_{\text{high}}=0.20$, the encoder collapses the search space into $4$
logical bits ($0$-bit:$6$, $2$-bit:$2$), with a $\chi^2$ early-stopping
threshold $T_{\text{early}}\approx 2.7$ from
$m'\sigma^2 + 4\sigma^2\sqrt{2m'} = 12(0.0841) + 4(0.0841)\sqrt{24} \approx 2.6562$.

A single CIM submission returns four distinct results
(Table~\ref{tab:n5_profiler}). Samples \#1 and
\#2 (lowest $E_{\text{QUBO}}=-434$) are the discrete ground state directly
($E_{\text{total}}=1$), \#3 and \#4 are pseudo-minima whose true lattice energy
reaches $E_{\text{total}}=73$ and $162$, and the bounded corrector pays this
gap at $R{=}1$ and $R{=}2$, all landing on the same ground state
$E{=}1 < T_{\text{early}}=2.7$. The pipeline recovers the secret
$\mathbf{s}=(7,27,14,23,26)$. The same submission yields both
direct ground states and pseudo-minima: the framework requires
only that the ground state lie within the $R\le 2$ neighborhood of \emph{some} returned candidate, a weaker requirement than other
schemes that treat QUBO bit-strings as literal ground states.

Note that the two-bit map represents the ternary step set $\{-1, 0, +1\}$ with four
bit-strings, so the central value is doubly encoded. This is a deliberately
redundant, sign-symmetric encoding: it keeps the search window centered from
the Babai's algorithm at the cost of one redundant code-point, and it induces a
ground-state degeneracy. For instance, samples \#1 and \#2 in
Table~\ref{tab:n5_profiler} are distinct bit-strings of equal energy.

\begin{table}[ht]
\centering
\caption{Results of \texttt{LWE\_5\_010} ($q=29$, $\sigma\approx0.29$). Four distinct results from a single CIM submission. $E_{\text{proj}}$ is the continuous-relaxation energy (what the QUBO minimizes), $E_{\text{QUBO}}$ is the Ising objective (an affine image of $E_{\text{proj}}$, not a lattice energy), $E_{\text{total}}$ is the true integer lattice energy.  
Samples \#1--\#2 (lowest $E_{\text{QUBO}}$) are the discrete ground state directly, \#3--\#4 are the pseudo-minima ($E_{\text{total}}=73,162$) which the bounded corrector ($R\le2$) rescues, all recovering the same ground state $E{=}1$, below the $\chi^2$ certificate $T_{\text{early}}=2.7$.}
\label{tab:n5_profiler}
\smallskip
\renewcommand{\arraystretch}{1.1}
\setlength{\tabcolsep}{6pt}
{
\begin{tabular}{ccccccc}
\toprule
\textbf{Sample} & \textbf{Source} & \textbf{Bitstring} & $E_{\text{QUBO}}$ & $E_{\text{proj}}$ & $E_{\text{total}}$ & \textbf{Corrector} \\
\midrule
\#1 & \texttt{Res:0} & $[0,1,0,1]$ & $-434$ & $0.76$  & $\mathbf{1}$ & $R{=}0$ (direct) \\
\#2 & \texttt{Res:1} & $[1,0,1,0]$ & $-434$ & $0.76$  & $\mathbf{1}$ & $R{=}0$ (direct) \\
\#3 & \texttt{Res:2} & $[1,0,0,0]$ & $-216$ & $59.92$ & $73$ & $R{=}1$ on $\text{dim}_{07}\!\to\!E{=}1$ \\
\#4 & \texttt{Res:3} & $[1,1,1,1]$ & $+116$ & $150.06$ & $162$ & $R{=}2$ on $\text{dim}_{07},\dim_{04}\!\to\!E{=}1$ \\
\bottomrule
\end{tabular}}
\end{table}
\emph{Remark.} 
This instance (\texttt{LWE\_5\_010}) could be solved by the modeling in Section~\ref{sec:modeling} alone, since classical Babai's nearest-plane (only use LLL) already finds the solution ($\Delta\mathbf{y}=\mathbf{0}$). We use it instead to trace the Section~\ref{sec:hardware} procedure end-to-end, the exploration-subspace division also illustrates the encoder's qubit reduction relative to modeling the full instance. Preprocessing here requires only LLL reduction rather than BKZ-$20$.

\subsection{Validation at $n=40$ (\texttt{LWE\_40\_005}): Multi-Seed Study}
\label{subsec:n40_results}

At $n=40$ with $m'=88$ and BKZ-$20$ (or even BKZ-$30$) reduction, a single pass
of Babai's nearest-plane algorithm fails to recover the correct lattice point,
yielding baseline residuals of order $6\times10^4$, several-fold above the $\chi^2$ threshold $T_{\text{early}}=9040$. Holding the
\texttt{LWE\_40\_005} instance and all algorithmic parameters fixed, we vary
only the random seed that selects an $\mathbf{A}_{\text{top}}$ invertible modulo
$q$, yielding $11$ row partitions ($\text{seed}\in\{42,1,2,\ldots,10\}$) of the
\emph{same} underlying LWE problem. Each seed is evaluated under two reduction
strengths: \textbf{Baseline} single-pass BKZ-$20$ and \textbf{Progressive}
BKZ-$20\!\to\!30$. For each
seed, CR-BNP isolates $d_{\text{exp}}=12$ exploration dimensions, the adaptive
encoder with $\tau_{\text{low}}=0.05$ and $\tau_{\text{high}}=0.20$ compresses them into $10$--$19$ logical bits depending on the empirical
$\delta$ distribution, and variable partitioning splits the instance into $32$
block-diagonal branches aggregated into one submission.

Table~\ref{tab:n40_multiseed} summarizes the results.
Baseline BKZ-$20$ recovers the secret on $2$ of $11$ seeds ($18\%$; $3$ hits in total); progressive BKZ-$20\!\to\!30$ doubles this to $4$ of $11$ ($36\%$; $21$ hits). On every successful seed, the residual energy follows a consistent descent: $E_{\text{Babai}} \sim 4\times 10^4\text{--}6\times10^4$ (single-pass nearest-plane, several-fold above $T_{\text{early}}$) $\to E_{R=0} \sim 4\times 10^4\text{--}7\times 10^4$ (CIM continuous pseudo-minimum evaluated on the discrete lattice) $\to E_{\text{final}} \in [5{,}270, 6{,}509]$ (post-corrector discrete ground state), spanning roughly one order of magnitude. Specifically, because each randomized minimal row selection extracts a distinct $m'$-dimensional subset of the global discrete Gaussian error, the true ground-state energies naturally fluctuate per seed. The empirically resolved $E_{\text{final}}$ values tightly cluster around the theoretically expected residual energy $\mu_E = m'\sigma^2 \approx 5{,}639$ and successfully get through the early-stopping threshold $T_{\text{early}} = 9{,}040$, confirming an exact cryptographic lattice point recovery.

The estimated single-pass success probability for each seed could be calculated as $P_{\text{comp}} = P_{\text{froz}} \times P_{\text{explore}}$. As established in Claim~\ref{thm:qubo_coverage_cumulative}, the ternary widening inherently pushes $P_{\text{explore}} \approx 95.7\%$, isolating the system's mathematical bottleneck to the frozen subspace limit: $P_{\text{froz}} = \prod_{i=1}^{76} \left[ 2\Phi\left(\frac{\|\mathbf{b}_i^*\|/2}{\sigma}\right) - 1 \right]$. 
For baseline BKZ-20, the theoretical mean $\overline{P}_{\text{comp}} = 18.0\%$ agrees with our empirical recovery of $18.18\%$ ($2/11$). 
Under progressive BKZ-20$\to$30 reduction, the empirical recovery rises to $36.36\%$ ($4/11$), higher than the $26.5\%$ mathematical expectation. We read the two as consistent in magnitude rather than as a
statistically significant gap; the modest excess is the expected signature of the corrector, while under the weak BKZ-20 baseline, the corrector doesn't help much due to the poorly reduced basis.

One specific seed (\texttt{seed=4}) succeeds under baseline BKZ-$20$ but fails under BKZ-$20\!\to\!30$. This occurs because the stronger reduction shifts the $\delta$-encoding distribution. So BKZ-$30$ is beneficial for most seeds but occasionally adversarial for an individual instance. 

The most informative single run is \texttt{seed=7}, the only seed that recovers the secret under \emph{both} BKZ configurations, at the identical post-corrector energy $E_{\text{final}}=5{,}270$: the same cryptographic ground state reached via two different lattice geometries. 

For the hardware time, we observed the single batched $t_{\text{CIM}}$ stays in the millisecond regime ($0.24$--$17.36$\,ms across the $22$ runs), and did not observe a correlation between time and block size. Also, $t_{\text{CIM}}$ carries no information about recovery: the fastest run overall ($0.24$\,ms, seed~$10$/baseline) fails, a comparably fast run ($0.39$\,ms, seed~$5$/progressive) succeeds, and the slowest ($17.36$\,ms, seed~$42$/baseline) fails.

Recovery is also stable under small changes of the exploration dimension. Repeating the baseline study with $d_{\text{exp}}=11$ (all else fixed) gives an essentially identical outcome, with the same two seeds succeeding (\texttt{seed=4}, \texttt{seed=7}, $18\%$) at the same post-corrector energies $E_{\text{final}}=6{,}509$ and $5{,}270$, recovering the same secret and agreeing on the failing seeds. 

\begin{table}[ht]
\centering
\caption{Multi-seed robustness study on \texttt{LWE\_40\_005} ($n=40$, $m'=88$, $q=1601$, $T_{\text{early}}=9040$). Each row is an independent run differing only in the random seed for row selection. For each reduction we report $N_{\text{log}}$, the number of logical qubits in QUBO; $t_{\text{CIM}}$, the wall-clock time of a single CIM submission; the per-seed analytic success rate $P_{\text{comp}}$; and the final outcome. For the baseline we also list $E_{\text{final}}$, the post-corrector residual energy; for the progressive run, $\delta$-valid, whether the encoder covers the true solution within $\{0,\pm 1\}$ steps per dimension. The ensemble mean of $P_{\text{comp}}$ predicts the recovery rate.
}
\label{tab:n40_multiseed}
\smallskip
\renewcommand{\arraystretch}{1.1}
\setlength{\tabcolsep}{3pt}
{\small
\resizebox{\textwidth}{!}{%
\begin{tabular}{c|ccccc|ccccc}
\toprule
& \multicolumn{5}{c|}{\textbf{Baseline BKZ-}$\mathbf{20}$} & \multicolumn{5}{c}{\textbf{Progressive BKZ-}$\mathbf{20\!\to\!30}$} \\
\textbf{seed} & $N_{\text{log}}$ & $E_{\text{final}}$ & $t_{\text{CIM}}$ (ms) & $P_{\text{comp}}$ & \textbf{Outcome} & $\delta$-valid & $N_{\text{log}}$ & $t_{\text{CIM}}$ (ms) & $P_{\text{comp}}$ & \textbf{Outcome} \\
\midrule
$42$ & $15$ & $27{,}070$ & $17.36$ & $0.189$ & fail & $\checkmark$ & $16$ & $9.30$ & $0.278$ & \textcolor{PassGreen}{break (5 hits)} \\
$1$  & $17$ & $26{,}944$ & $0.80$  & $0.169$ & fail & $\checkmark$ & $17$ & $8.82$ & $0.267$ & fail \\
$2$  & $15$ & $24{,}880$ & $9.99$  & $0.181$ & fail & $\times$    & $13$ & $0.93$ & $0.306$ & \textcolor{PassGreen}{break ($E\!=\!5{,}778$)} \\
$3$  & $12$ & $27{,}537$ & $0.60$  & $0.184$ & fail & $\times$    & $17$ & $8.26$ & $0.260$ & fail \\
$4$  & $15$ & $\mathbf{6{,}509}$  & $0.41$ & $0.170$ & \textcolor{PassGreen}{break} & $\times$    & $15$ & $3.08$ & $0.272$ & fail \\
$5$  & $17$ & $26{,}712$ & $6.83$ & $0.200$ & fail & $\checkmark$ & $10$ & $0.39$ & $0.248$ & \textcolor{PassGreen}{break (13 hits)} \\
$6$  & $16$ & $27{,}938$ & $5.28$ & $0.179$ & fail & $\checkmark$ & $19$ & $9.62$ & $0.288$ & fail \\
$7$  & $15$ & $\mathbf{5{,}270}$  & $9.42$ & $0.191$ & \textcolor{PassGreen}{break (2 hits)} & $\checkmark$ & $14$ & $0.56$ & $0.224$ & \textcolor{PassGreen}{break (2 hits)} \\
$8$  & $15$ & $28{,}885$ & $10.22$ & $0.178$ & fail & $\times$    & $12$ & $0.87$ & $0.267$ & fail \\
$9$  & $11$ & $25{,}502$ & $0.91$ & $0.151$ & fail & $\times$    & $12$ & $0.77$ & $0.247$ & fail \\
$10$ & $13$ & $27{,}953$ & $0.24$ & $0.188$ & fail & $\times$    & $11$ & $0.55$ & $0.257$ & fail \\
\midrule
\textbf{Summary} & & & & $\overline{P}_{\text{comp}}{=}\mathbf{0.180}$ & $2/11{=}\mathbf{18\%}$ & & & & $\overline{P}_{\text{comp}}{=}\mathbf{0.265}$ & $4/11{=}\mathbf{36\%}$ \\
\bottomrule
\end{tabular}}}
\end{table}


\subsection{Decision-LWE (DLWE) Validation}
\label{subsec:dlwe}
We now instantiate the theoretical bounds of Theorem~\ref{thm:early_stopping} using the specific $n=40$ configuration ($m'=88, \sigma=8.005$). The early-stopping certificate threshold evaluates to $T_{\text{early}}=9040$, the theoretical right-tail probability is $\alpha_{\text{tail}} \approx 2.85\times10^{-4}$, and the theoretical Gaussian Heuristic volume limits the volumetric false-acceptance bound to $\epsilon_{\text{vol}} \approx 5\times10^{-13}$.

To validate the distinguishing advantage $\mathrm{Adv}^{\mathrm{DLWE}}(\mathcal{D})$, we evaluate the empirical acceptance probabilities based on the event $\{E \le T_{\text{early}}\}$:

As demonstrated in Section~\ref{subsec:n40_results}, successfully solved LWE seeds drive the residual energy below  $T_{\text{early}}$ (e.g., $E_{\text{final}} \in [5{,}270, 6{,}509]$). However, when the solver fails, it yields a residual energy overlapping the $\sim\!2.5\text{--}2.9\times10^4$ range. Because these failed instances miss the threshold, the distinguisher defaults to $0$. Consequently, the empirical probability of outputting $1$ on an LWE target is bounded by the algorithmic recovery rate: $\Pr[\mathcal{D}(\text{LWE})=1] = p_{\text{succ}} \cdot (1 - \alpha_{\text{tail}}) \approx p_{\text{succ}}$.

To validate the null hypothesis, we generated $N=220$ independent uniform-random targets $\mathbf{c}_{\text{rand}}\sim\mathcal{U}(\mathbb{Z}_q^{m'})$, then evaluated them using a classical lattice enumeration solver. Across all targets, zero instances crossed $T_{\text{early}}$ (the minimum global energy observed was $28{,}564$, far above the threshold, as plotted in Figure~\ref{fig:dlwe}). The
single hardware (CIM) control run is consistent with this ($E_{\text{rand}}=28564$). This confirms the theoretically negligible estimation ($\epsilon_{\text{vol}} \approx 5\times10^{-13}$). 

Substituting the empirical boundaries into Theorem~\ref{thm:early_stopping}, the DLWE distinguishing advantage of the CIM-BDD framework evaluates to the algorithmic recovery rate:
\begin{align*}
\mathrm{Adv}^{\mathrm{DLWE}} &= \big|\Pr[\mathcal{D}(\text{LWE})=1] - \Pr[\mathcal{D}(\text{Uniform})=1]\big| \\
&\approx | p_{\text{succ}} - 0 | = p_{\text{succ}}.
\end{align*}
This yields an absolute single-shot distinguishing advantage of $0.18$ (under baseline BKZ-20) and $0.36$ (under progressive BKZ-20$\to$30). Amortizing over $k$ independent row-selection seeds, the advantage robustly amplifies to $1-(1-p_{\text{succ}})^{k}$.

\begin{figure}[ht]
    \centering
    \includegraphics[width=0.65\linewidth]{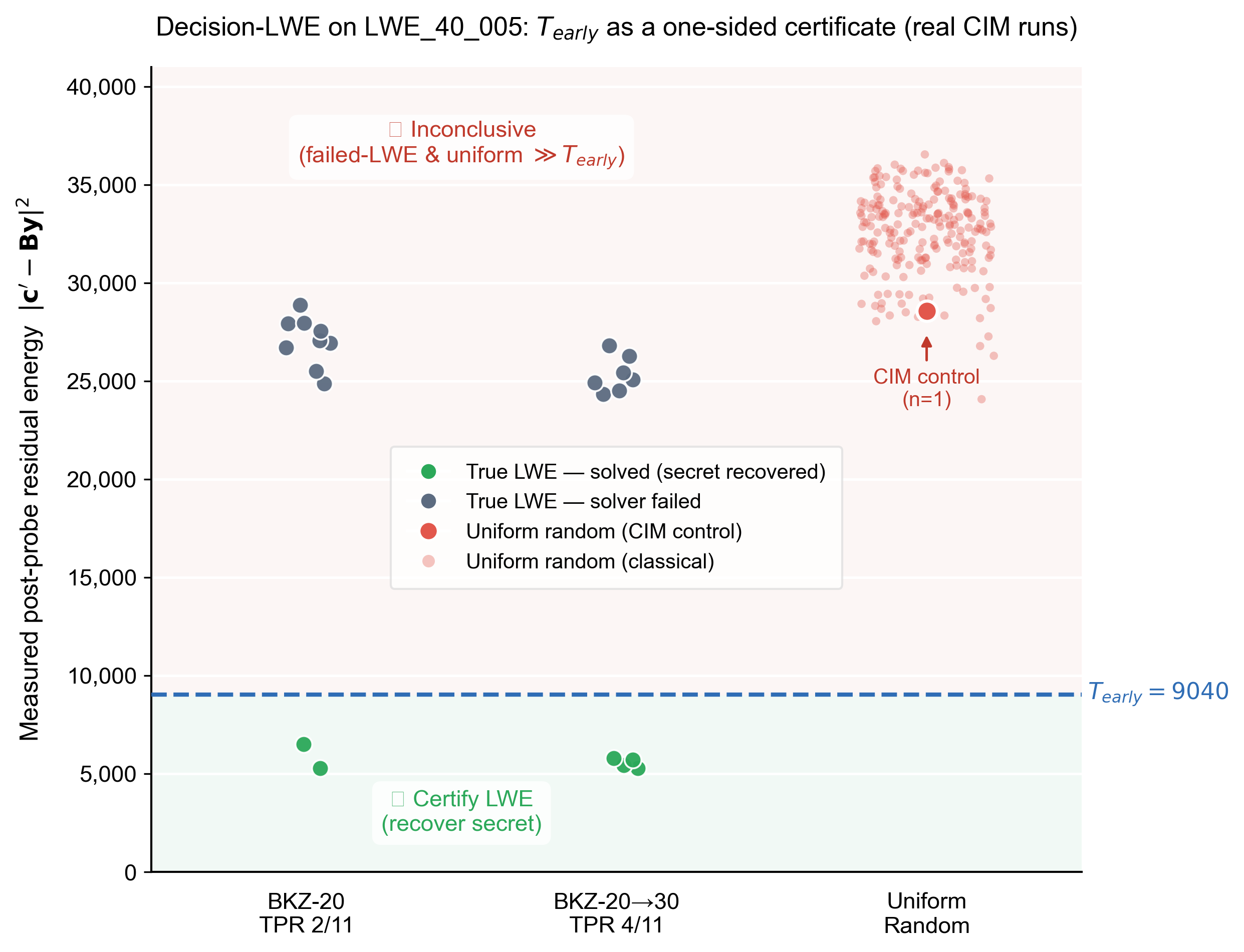}
    \caption{Decision-LWE on \texttt{LWE\_40\_005}: measured post-corrector residual energies
($T_{\text{early}}=9040$). Green = true-LWE seeds the pipeline \emph{solves} (residual
crosses below $T_{\text{early}}$, secret recovered), gray = true-LWE seeds the
solver \emph{fails}, red = uniform-random control, light-red histogram = $n{=}220$
uniform targets evaluated by classical minimization on the same reduced bases.
Progressive BKZ-$20\!\to\!30$ raises the solved fraction from $2/11$ to $4/11$. The
threshold separates \emph{solved} from \emph{unsolved}: only solved seeds cross it, while
failed-LWE and uniform targets are both $\gg T_{\text{early}}$.
Energy magnitude above the threshold is not
discriminative, the operative signal is the binary crossing event.}
    \label{fig:dlwe}
\end{figure}

\section{Computational Complexity and Scalability}
\label{sec:scalability}

Table~\ref{tab:prior_compare} positions our framework against representative families of quantum-assisted lattice cryptanalysis. The two metrics which scale far better than competing NISQ frameworks are \textbf{qubit economy} and \textbf{hardware call count}. HAWI demonstrates DLWE on $n=2$ (a $5$-qubit instance) with a qubit budget scaling as $\le m(m{+}1)=\mathcal{O}(m^2)$ and on the order of $10^4$ shots per expectation value, while Lv \textit{et al.}~\cite{Lv2022} estimate (by resource counting rather than on hardware) about $1{,}126$ logical qubits for an $n=40$ uSVP attack, rising to $13{,}768$ at KYBER-512 scale ($n=256$). Our framework attacks $n=40$ on real hardware with fewer than $20$ logical qubits and a single hardware call with a $36\%$ success probability under progressive BKZ-20$\to$30.

\begin{table}[ht]
\centering
\caption{Comparison of quantum-assisted attacks on LWE/SVP along qubit count and hardware-call count. These two axes are where the present co-design differs most from prior NISQ frameworks. Notes: for Lv~\cite{Lv2022}, the qubit figure is a resource \emph{estimate} at $n=40$ (its actual hardware demonstration is a $3$-qubit toy instance) and its VQE results are simulation-based; for Qayyum~\cite{Qayyum2023, Qayyum2025}, the qubit count is a theoretical upper bound (Prop.~1/2) and the experiments are classical MIP solves rather than runs on a quantum annealer; for Joseph~\cite{Joseph2020}, ``dim'' refers to the quantum SVP simulation.}
\label{tab:prior_compare}
\smallskip
\setlength{\tabcolsep}{3pt}
\renewcommand{\arraystretch}{1.2}
\footnotesize
\resizebox{\textwidth}{!}{%
\begin{tabular}{lllllll}
\toprule
\textbf{Method} & \textbf{Target} & \textbf{Demo.~$n$} & \textbf{Qubits} & \textbf{Penalty} & \textbf{Opt. Loop Scaling} & \textbf{HW calls} \\
\midrule
HAWI~\cite{Zheng2025}                & DLWE              & $n=2$ (HW)            & $\le m(m{+}1)$                & None ($m$ sub-models)      & Gradient descent ($\sim 10$ iters) & $\sim\!10^4$ shots (QAOA) \\
Qayyum~\cite{Qayyum2023, Qayyum2025}  & SLWE \& SIS              & $n\le6$ (MIP)            & $\mathcal{O}(mnq)$ (slack-heavy)                 & Explicit (slack vars)    & QUBO/annealing (proposed) & est.\ only \\
Joseph~\cite{Joseph2020}             & SVP           & dim $\le 4$ (sim.)       & $\mathcal{O}(N \log N)$              & None (1st excited state)      & Sub-adiabatic sweep & repeated \\
Dable-Heath~\cite{DableHeath2023} & SVP & dim $\le 7$ (sim.) & $\mathcal{O}(n\log n)$--$\mathcal{O}(n^{5/2})$ (enc.-dep.) & Restricted qubit ($n$ sub-models; $1$ cyclic) & CIM-CFC sampling & $\sim\!10^3$ runs/inst.\ (sim.) \\
Lv~\cite{Lv2022}                     & SLWE$\to$uSVP                  & $n\le5$ (sim.)           & $\sim$1126 (est., $n{=}40$)     & Explicit  & Variational loop & sim.\ only \\
\midrule
C++ \texttt{fplll} Enum. & CVP                 & $n=40$ (Exact)           & N/A    & None  & $\mathcal{O}(2^d) \cdot \mathcal{O}(d_{\text{froz}}^2) + \mathcal{O}(T_{\text{BKZ}})$ & N/A \\
\textbf{This work}          & \textbf{SLWE \& DLWE}  & $\boldsymbol{n=40}$ (HW) & \textbf{$\le$19} (Logical)     & \textbf{None}           & $ \mathcal{O}(T_{\text{BKZ}})$ & $\boldsymbol{1}$ \textbf{batched submission} \\
\bottomrule
\end{tabular}}
\end{table}

\subsection{Asymptotic Complexity Analysis}
Decomposing the end-to-end pipeline, the total time complexity is
\begin{equation}
\label{eq:total_time}
T_{\text{total}} = \underbrace{T_{\text{BKZ}}}_{\text{global reduction}} 
                 + \underbrace{\mathcal{O}(d_{\text{exp}}^2)}_{\text{CR-BNP}} 
                 + \underbrace{\mathcal{O}(d_{\text{exp}})}_{\text{encoder}} 
                 + \underbrace{\mathcal{O}(1)}_{\text{CIM call}} 
                 + \underbrace{\mathcal{O}(m'^4)}_{\text{bounded corrector } (R\le 2)}.
\end{equation}
The BKZ preprocessing $T_{\text{BKZ}}$ scales as 
$\exp(\mathcal{O}(\beta))$ where $\beta$ is the BKZ block size. The exploration-subspace operations (CR-BNP back-substitution and the adaptive encoder) are at most quadratic in $d_{\text{exp}}$ ($<15$ in our instances). The CIM call is a single evaluation exhibiting constant-time scaling. The bounded corrector iterates over $\binom{m'}{2}$ pairs, each evaluation requires an $\mathcal{O}(m'^2)$ matrix-vector energy reassembly plus an $\mathcal{O}(d_{\text{froz}}\cdot m')$ Babai's nearest plane at the front-end. 

For our $n=40$ configuration ($m'=88$, $d_{\text{exp}}=12$), the time of the bounded corrector is negligible against $T_{\text{BKZ}}$, so the end-to-end wall-clock is dominated by $T_{\text{BKZ}}$, not by the hardware-side operation. Empirically, the single batched CIM submission on the encoded $n=40$ instance takes $0.4$--$9.3$~ms (progressive BKZ-$20\!\to\!30$) across the four successful seeds of Section~\ref{subsec:n40_results}, with logical bit-widths $N_\text{log}\in \{10,13,14,16\}$ produced by the adaptive encoder. The batched $t_{\text{CIM}}$ stays two to four orders of magnitude below the classical reduction time $T_{\text{BKZ}}$ (of order $10^3$\,ms for the progressive schedule). We also timed the optimized C++ \texttt{fplll} solver (\texttt{CVP.closest\_vector}, Schnorr--Euchner enumeration) on the \emph{full} encoded $d_{\text{exp}}=12$ exploration sub-lattice (frozen subspace fixed by Babai): it averages $0.86$\,ms, i.e.\ the CIM is not faster than classical enumeration at the 12-dimension scale. The expectation is that the constant running time of the CIM manifests a scaling advantage as $d_{\text{exp}}$ grows past the point where exhaustive enumeration becomes infeasible.

\subsection{Scalability to Cryptographic Dimensions}
\label{subsec:scalability}
Scaling beyond $n=40$ is an open direction
that may require coordinated adjustments: extending the subspace of the exploration area; widening the search radius (e.g., extending the ternary domain $\pm 1$ to
$\pm 2, \pm 3, \dots$ via wider mixed-radix mappings); replacing the
classical Babai nearest-plane in the frozen subspace with a Lindner--Peikert
\emph{Nearest Planes} enumeration~\cite{LindnerPeikert2011}; and enlarging the BKZ block size to improve basis orthogonality. We leave these for future work.

We also emphasize that this hybrid framework \textit{does not alter the
fundamental asymptotic hardness of LWE}. As Eq.~\eqref{eq:total_time} shows, the
global complexity is dominated by the 
classical BKZ preprocessing, which scales as $\exp(\mathcal{O}(\beta))$. 

\section{Conclusion}
\label{sec:conclusion}
We presented CIM-BDD, a hybrid BDD solver for the
Search and Decision variants of LWE. Its core is a strictly \emph{penalty-free}
QUBO mapping (Section~\ref{sec:modeling}): an algebraic elimination of the secret absorbs the modulus into a primal $q$-ary lattice and recasts LWE as a Closest Vector Problem (especially BDD), then uses the Babai-recentered least-squares residual
\emph{directly} as the QUBO energy. The cryptographic noise thus becomes the
objective rather than a penalized constraint. To fit the NISQ hardware (Section~\ref{sec:hardware}), we use a
CR-BNP-driven adaptive $0/1/2$-bit encoder, block-diagonal partitioning over the
most ambiguous coordinates to make full use of the hardware capacity, a $R\le 2$ corrector as a
deterministic correction layer, and the early-stopping threshold
$T_{\text{early}}$ acts as a one-sided certificate, to recover the secret from the approximate relaxation. 

On the TU Darmstadt LWE Challenge at $n=40$ ($q=1601$, $\sigma=8.005$) with $m'=88$, single-pass Babai fails, whereas the CIM-plus-corrector pipeline recovers the secret on $2$ of $11$ row-seeds ($3$ hits) under baseline BKZ-$20$ and on $4$ of
$11$ ($21$ hits) under progressive BKZ-$20\!\to\!30$, each in a single millisecond-scale
submission.  
Since each seed attacks the same secret and carries a one-sided certificate, the attack stops at the first crossing,  
so the expected number of reductions for progressive BKZ-20$\to$30 is \( 1 / 0.36 \approx 3 \). The  
cumulative recovery probability \( 1 - (1 - 0.36)^k \) reaches 83\% by \( k = 4 \) and exceeds 98\%  
by \( k = 10 \) with independent seeds reducible in parallel.
The whole modeling takes only $N_{\text{log}}\le 19$ logical
bits. Beyond raw recovery rates, the CIM contributes a robust recovery mechanism:
each submission returns an ensemble of low-energy configurations, and the
$R\le 2$ corrector succeeds whenever the true point lies in the neighborhood of any of them---a weaker condition than requiring a single exact
ground-state bit-string. Its batched submission time stays constant in the
millisecond regime (here $0.24$--$17.4$\,ms) irrespective of logical bit-width,
while the end-to-end cost remains dominated by classical BKZ. 

In conclusion, this work provides a methodological template for evaluating post-quantum security primitives on
physical Ising architectures, with the role of the CIM understood as an
ensemble sampler of low-energy states whose practical value at cryptographic
scale remains to be demonstrated. Our framework establishes a scalable foundation for tackling higher-dimensional cryptographic challenges in the future. 


\clearpage

\appendix
\section{Algebraic Isomorphism and Explicit Lattice Basis Extraction}
\label{sec:appendix_algebraic_basis}

In standard reductions from the Learning With Errors (LWE) problem to Bounded-Distance Decoding (BDD), the primal $q$-ary lattice $\Lambda_q(\mathbf{A})$ is often defined using a generating matrix $[\mathbf{A} \mid q\mathbf{I}]$. However, this generating matrix is linearly dependent, so we further construct a full-rank, lower-triangular integer basis for the lattice.

Given the primal LWE matrix $\mathbf{A} \in \mathbb{Z}_q^{m \times n}$ and the target vector $\mathbf{c} \in \mathbb{Z}_q^m$, we randomly sample an $m'$-dimensional subset of equations ($n < m' \le m$). Let this sampled sub-problem be represented by the matrix $\mathbf{A}' \in \mathbb{Z}_q^{m' \times n}$ and the corresponding sub-target vector be $\mathbf{c}' \in \mathbb{Z}_q^{m'}$.

\paragraph{Algebraic Elimination of the Secret.}
To construct the basis for $\Lambda_q(\mathbf{A}')$, we partition the sub-matrix $\mathbf{A}'$ and any corresponding valid lattice vector $\mathbf{v} \in \Lambda_q(\mathbf{A}') \subset \mathbb{Z}^{m'}$ into a top block of dimension $n$ and a bottom block of dimension $m'-n$:
\begin{equation}
    \mathbf{A}' = \begin{bmatrix} \mathbf{A}_{\text{top}} \\ \mathbf{A}_{\text{bot}} \end{bmatrix}, \quad \mathbf{v} = \begin{bmatrix} \mathbf{v}_{\text{top}} \\ \mathbf{v}_{\text{bot}} \end{bmatrix}.
\end{equation}
We enforce that the leading $n \times n$ submatrix $\mathbf{A}_{\text{top}}$ is invertible modulo $q$, resampling the $m'$ rows if this condition fails. By the definition of the $q$-ary lattice, any valid lattice vector $\mathbf{v}$ evaluated modulo $q$ must satisfy the congruence relation $\mathbf{v} \equiv \mathbf{A}'\mathbf{s} \pmod q$ for some secret $\mathbf{s} \in \mathbb{Z}_q^n$. Decomposing this congruence yields:
\begin{align}
    \mathbf{v}_{\text{top}} &\equiv \mathbf{A}_{\text{top}} \mathbf{s} \pmod q, \label{eq:v_top_app} \\
    \mathbf{v}_{\text{bot}} &\equiv \mathbf{A}_{\text{bot}} \mathbf{s} \pmod q. \label{eq:v_bot_app}
\end{align}

Because $\mathbf{A}_{\text{top}}$ is explicitly chosen to be invertible over $\mathbb{Z}_q$, we can uniquely express the hidden cryptographic variable $\mathbf{s}$ from Eq.~\eqref{eq:v_top_app}:
\begin{equation} \label{eq:s_elim_app}
    \mathbf{s} \equiv \mathbf{A}_{\text{top}}^{-1} \mathbf{v}_{\text{top}} \pmod q.
\end{equation}
Substituting Eq.~\eqref{eq:s_elim_app} into Eq.~\eqref{eq:v_bot_app} eliminates the secret $\mathbf{s}$ entirely, yielding a deterministic algebraic constraint between the top and bottom coordinates of any valid lattice point:
\begin{equation}
    \mathbf{v}_{\text{bot}} \equiv \mathbf{A}_{\text{bot}} (\mathbf{A}_{\text{top}}^{-1} \mathbf{v}_{\text{top}}) \equiv \mathbf{T}_{\text{mat}}\mathbf{v}_{\text{top}} \pmod q,
\end{equation}
where $\mathbf{T}_{\text{mat}} = \mathbf{A}_{\text{bot}}\mathbf{A}_{\text{top}}^{-1} \pmod q \in \mathbb{Z}_q^{(m'-n) \times n}$ acts as the transition matrix.

\paragraph{Basis Construction.}
To lift this modular congruence into the discrete integer domain $\mathbb{Z}^{m'}$, we introduce an arbitrary integer vector $\mathbf{k} \in \mathbb{Z}^{m'-n}$ to absorb the modulo $q$. This converts the congruence into an integer equality:
\begin{equation} \label{eq:integer_lift_app}
    \mathbf{v}_{\text{bot}} = \mathbf{T}_{\text{mat}}\mathbf{v}_{\text{top}} + q\mathbf{k}.
\end{equation}

This establishes a necessary and sufficient condition: a point $\mathbf{v} \in \mathbb{Z}^{m'}$ mathematically belongs to the lattice $\Lambda_q(\mathbf{A}')$ if and only if it can be parameterized by arbitrary free integer variables $\mathbf{v}_{\text{top}} \in \mathbb{Z}^n$ and $\mathbf{k} \in \mathbb{Z}^{m'-n}$ as follows:
\begin{equation} \label{eq:basis_construction_app}
    \mathbf{v} = \begin{bmatrix} \mathbf{v}_{\text{top}} \\ \mathbf{v}_{\text{bot}} \end{bmatrix} 
    = \begin{bmatrix} \mathbf{v}_{\text{top}} \\ \mathbf{T}_{\text{mat}}\mathbf{v}_{\text{top}} + q\mathbf{k} \end{bmatrix}
    = \underbrace{\begin{bmatrix} \mathbf{I}_{n \times n} & \mathbf{0}_{n \times (m'-n)} \\ \mathbf{T}_{\text{mat}} & q \mathbf{I}_{(m'-n) \times (m'-n)} \end{bmatrix}}_{\mathbf{B}_{\text{init}}} 
    \begin{bmatrix} \mathbf{v}_{\text{top}} \\ \mathbf{k} \end{bmatrix}.
\end{equation}

Extracting the coefficients of the free integer vector $[\mathbf{v}_{\text{top}}^\top, \mathbf{k}^\top]^\top$ immediately provides the explicit block matrix representation of the initial lattice basis $\mathbf{B}_{\text{init}} \in \mathbb{Z}^{m' \times m'}$.
$\mathbf{B}_{\text{init}}$ is a full-rank integer basis ready to be globally reduced by BKZ/LLL algorithms into $\mathbf{B}_0$.

\section{Detailed Algebraic Formulations and Execution Pipeline}
\label{sec:appendix_pipeline}

This appendix provides mathematical formulations in the CIM-BDD framework, progressing sequentially from global preprocessing to cryptographic recovery. Algorithm~\ref{alg:qcbdd} synthesizes the step-by-step mathematical flow.

\subsection{Pre-processing and Hardware Embedding(Algorithm Phases 1--4)}

\paragraph{Step 1: Algebraic Isomorphism and Global Reduction.}
Given the primal LWE matrix \( \mathbf{A} \in \mathbb{Z}_q^{m \times n} \) and the target vector \( \mathbf{c} \in \mathbb{Z}_q^m \), we perform randomized trials to uniformly sample an \( m' \)-dimensional subset of rows without replacement. We enforce that its leading \( n \times n \) submatrix \( \mathbf{A}_{\text{top}} \in \mathbb{Z}_q^{n \times n} \) is invertible modulo \( q \), resampling the subset if this condition fails. The transition matrix is computed algebraically as $\mathbf{T}_{\text{mat}} \equiv \mathbf{A}_{\text{bot}}\mathbf{A}_{\text{top}}^{-1} \pmod q \in \mathbb{Z}_q^{(m'-n) \times n}$. 

This isomorphism systematically constructs the primal $q$-ary lattice $\Lambda_q(\mathbf{A})$. Any valid lattice vector $\mathbf{v} \in \Lambda_q(\mathbf{A})$ satisfies the modulo relation $\mathbf{v}_{\text{bot}} \equiv \mathbf{T}_{\text{mat}}\mathbf{v}_{\text{top}} \pmod q$. By expressing this modular congruence as an integer linear combination $\mathbf{v}_{\text{bot}} = \mathbf{T}_{\text{mat}}\mathbf{v}_{\text{top}} + q \cdot \mathbf{k}$ for some arbitrary integer vector $\mathbf{k} \in \mathbb{Z}^{m'-n}$, we explicitly formulate the block lower-triangular CVP basis $\mathbf{B}_{\text{init}} \in \mathbb{Z}^{m' \times m'}$ as the block matrix:
\begin{equation}
    \mathbf{B}_{\text{init}} = \begin{bmatrix} \mathbf{I}_{n \times n} & \mathbf{0}_{n \times (m'-n)} \\ \mathbf{T}_{\text{mat}} & q \cdot \mathbf{I}_{(m'-n) \times (m'-n)} \end{bmatrix} \in \mathbb{Z}^{m' \times m'}.
\end{equation}
Applying global lattice reduction (e.g., BKZ) performs integer unimodular transformations on $\mathbf{B}_{\text{init}}$, yielding the highly orthogonalized integer basis $\mathbf{B}_0 \in \mathbb{Z}^{m' \times m'}$. The target vector maps unchanged as a target point $\mathbf{c}'$.

\paragraph{Step 2: Gram-Schmidt Projection and Orthogonal Slicing.}
Global lattice reduction implicitly maintains the Gram-Schmidt Orthogonalization (GSO), decomposing the integer basis into continuous real matrices: $\mathbf{B}_0 = \mathbf{B}^* \mathbf{U}_\mu$, where $\mathbf{B}^* \in \mathbb{R}^{m' \times m'}$ is the orthogonal basis matrix, and $\mathbf{U}_\mu \in \mathbb{R}^{m' \times m'}$ is the upper unit-triangular matrix ($\mathbf{U}_\mu[i,i]=1$).

To construct an analytically uncoupled orthogonal coordinate system, we project the discrete target $\mathbf{c}'$ onto the continuous orthogonal basis $\mathbf{B}^*$. Let $\mathbf{c}' = \sum_{j=1}^{m'} \alpha_j \mathbf{b}_j^*$, where $\boldsymbol{\alpha} \in \mathbb{R}^{m'}$ is the continuous projection coefficient vector, $\alpha_i = \frac{\langle \mathbf{c}', \mathbf{b}_i^* \rangle}{\|\mathbf{b}_i^*\|^2} \in \mathbb{R} \quad \forall i \in \{1, \dots, m'\}$.

To geometrically decouple the exploration zone (the trailing $d_{\text{exp}}$ dimensions, denoted as the orthogonal complement subspace $\mathcal{F}^\perp$), we extract the residual continuous target projection $\boldsymbol{\alpha}_{\text{exp}} \in \mathbb{R}^{d_{\text{exp}}}$ and the bottom-right submatrix $\mathbf{U}_{\mu,\text{exp}} \in \mathbb{R}^{d_{\text{exp}} \times d_{\text{exp}}}$. Because $\mathbf{U}_{\mu}$ is structurally upper-triangular, this trailing continuous subspace is decoupled from low-index variables.

\paragraph{Step 3: Continuous Relaxed Babai's Nearest Plane (CR-BNP).}
We apply a continuous relaxation by suspending the discrete rounding operations to locate the continuous intersection point within \( \mathcal{F}^{\perp} \).
This translates to solving the continuous upper-triangular linear system $\mathbf{U}_{\mu,\text{exp}} \mathbf{y}_c = \boldsymbol{\alpha}_{\text{exp}}$ to obtain the relaxed fractional coordinates $\mathbf{y}_c \in \mathbb{R}^{d_{\text{exp}}}$. By expanding the $i$-th row of this linear system, we directly obtain the $\mathcal{O}(d_{\text{exp}}^2)$ Gaussian back-substitution formula:
\begin{equation}
    (\mathbf{y}_c)_i + \sum_{j=i+1}^{d_{\text{exp}}} U_{\mu,\text{exp},i,j} \cdot (\mathbf{y}_c)_j = \alpha_{\text{exp},i} \quad \implies \quad (\mathbf{y}_c)_i = \alpha_{\text{exp},i} - \sum_{j=i+1}^{d_{\text{exp}}} U_{\mu,\text{exp},i,j} \cdot (\mathbf{y}_c)_j.
\end{equation}
We then establish a discrete integer center $\mathbf{y}_{\text{center}} = \lfloor \mathbf{y}_c \rceil \in \mathbb{Z}^{d_{\text{exp}}}$. The alignment error is isolated as the continuous fractional ambiguity $\boldsymbol{\delta} = \mathbf{y}_c - \mathbf{y}_{\text{center}} \in [-0.5, 0.5]^{d_{\text{exp}}}$, which dictates the subsequent physical hardware mapping.

\paragraph{Step 4: Adaptive Encoding and Subspace QUBO Formulation.}
Transfer the continuous subspace ($\mathbb{R}$) back into the discrete domain ($\mathbb{Z}$) for hardware execution. Let $\mathbf{z} \in \{0,1\}^{N_{\text{log}}}$ denote the logical binary variables evaluated by CIM. Based on the magnitude and sign of the continuous ambiguity $\boldsymbol{\delta}$, an adaptive encoder constructs a non-negative integer step-size encoding matrix $\mathbf{T}_{\text{enc}}$ and a deterministic base shift vector $\boldsymbol{\mathrm{offset}}_{\text{enc}}$, governed by empirical thresholds $\tau_{\text{low}}$, $\tau_{\text{high}}$:
\begin{itemize}
    \item \textbf{0-Bit ($|\delta_i| < \tau_{\text{low}}$):} We fix $\text{offset}_{\text{enc},i} = 0$, appending no columns to $\mathbf{T}_{\text{enc}}$.
    \item \textbf{1-Bit ($|\delta_i| \ge \tau_{\text{high}}$):} A single bit explores the most likely adjacent integer. We fix $\text{offset}_{\text{enc},i} = 0$, and append one column containing the value $\operatorname{sgn}(\delta_i)$ to $\mathbf{T}_{\text{enc}}$.
    \item \textbf{2-Bit ($\tau_{\text{low}} \le |\delta_i| < \tau_{\text{high}}$):} To symmetrically explore the local integer neighborhood $\{-1, 0, 1\}$, we explicitly set a negative shift $\text{offset}_{\text{enc},i} = -1$, and append two independent columns to $\mathbf{T}_{\text{enc}}$ with the value $+1$.
\end{itemize}

The absolute search floor is fixed as $\mathbf{y}_{\text{base}} = \mathbf{y}_{\text{center}} + \boldsymbol{\mathrm{offset}}_{\text{enc}} \in \mathbb{Z}^{d_{\text{exp}}}$. Consequently, the candidate displacement vector generated by the hardware is constrained to the integer lattice domain:
$ \tilde{\mathbf{y}}_{\text{exp}} = \mathbf{y}_{\text{base}} + \mathbf{T}_{\text{enc}}\mathbf{z} \in \mathbb{Z}^{d_{\text{exp}}}
$

To construct a QUBO landscape, we evaluate the true squared Euclidean distance in the orthogonal continuous space. The coordinate continuous error relative to the basis $\mathbf{B}^*$ is $\boldsymbol{\Delta} = \boldsymbol{\alpha}_{\text{exp}} - \mathbf{U}_{\mu,\text{exp}}\tilde{\mathbf{y}}_{\text{exp}} \in \mathbb{R}^{d_{\text{exp}}}$. 

Because orthogonal basis vectors have varying non-unit lengths, computing the true Euclidean norm requires weighting each squared coordinate error $\Delta_i^2$ by its squared basis length $\|\mathbf{b}_i^*\|^2$. Let $\mathbf{D}_{\text{exp}} = \mathrm{diag}(\|\mathbf{b}_{m'-d_{\text{exp}}+1}^*\|^2, \dots, \|\mathbf{b}_{m'}^*\|^2) \in \mathbb{R}^{d_{\text{exp}} \times d_{\text{exp}}}$ be the continuous diagonal weighting matrix. The penalty-free continuous energy $E_{\text{proj}}(\mathbf{z}) \in \mathbb{R}^{\ge 0}$ optimized by QUBO evaluates to:
\begin{equation}
\begin{aligned}
    E_{\text{proj}}(\mathbf{z}) &= \sum_{i=1}^{d_{\text{exp}}} \Delta_i^2 \|\mathbf{b}_{m'-d_{\text{exp}}+i}^*\|^2 = \boldsymbol{\Delta}^\top \mathbf{D}_{\text{exp}} \boldsymbol{\Delta} \\
    &= \left(\boldsymbol{\alpha}_{\text{exp}} - \mathbf{U}_{\mu,\text{exp}}\tilde{\mathbf{y}}_{\text{exp}}\right)^\top \mathbf{D}_{\text{exp}} \left(\boldsymbol{\alpha}_{\text{exp}} - \mathbf{U}_{\mu,\text{exp}}\tilde{\mathbf{y}}_{\text{exp}}\right) \\
    &= \left(\boldsymbol{\alpha}_{\text{exp}} - \mathbf{U}_{\mu,\text{exp}}(\mathbf{y}_{\text{base}} + \mathbf{T}_{\text{enc}}\mathbf{z})\right)^\top \mathbf{D}_{\text{exp}} \left(\boldsymbol{\alpha}_{\text{exp}} - \mathbf{U}_{\mu,\text{exp}}(\mathbf{y}_{\text{base}} + \mathbf{T}_{\text{enc}}\mathbf{z})\right).
\end{aligned}
\end{equation}

Before mapping to the physical hardware, the derived continuous external fields and couplings are linearly scaled independently for each sub-QUBO into the 8-bit signed integer range supported by the CIM.

\subsection{Post-Processing and Verification (Phase 5)}

\paragraph{Step 5: Frozen-Subspace Babai's Nearest Plane.}
Upon hardware convergence, the CIM returns a discrete bitstring $\mathbf{z}_{\text{hw}} \in \{0,1\}^N$. The true lattice displacement in the exploration zone is reconstructed as an integer vector $\mathbf{y}_{\text{exp}} = \mathbf{y}_{\text{base}} + \mathbf{T}_{\text{enc}}\mathbf{z}_{\text{hw}} \in \mathbb{Z}^{d_{\text{exp}}}$. Substituting this configuration into the global lattice, a classical Babai's Nearest Plane sequentially computes the compensatory frozen integer coordinates $\mathbf{y}_{\text{froz}} \in \mathbb{Z}^{m'-d_{\text{exp}}}$ over $\mathbf{B}_{\text{froz}}$, subject to the dynamically updated integer target $\mathbf{c}_{\text{froz}} = \mathbf{c}' - \mathbf{B}_{0,\text{exp}}\mathbf{y}_{\text{exp}} \in \mathbb{Z}^{m'}$.

\paragraph{Step 6: Bounded Corrector and Secret Recovery.}
Concatenating the segments yields the global discrete candidate vector $\mathbf{y}_{\text{cand}} = [\mathbf{y}_{\text{froz}}^\top, \mathbf{y}_{\text{exp}}^\top]^\top \in \mathbb{Z}^{m'}$. The global discrete residual energy is $E = \|\mathbf{c}' - \mathbf{B}_0\mathbf{y}_{\text{cand}}\|^2$. 

If $E > T_{\text{early}}$, the deterministic bounded corrector ($R \le 2$) operates on the full candidate vector to search adjacent integer assignments.

Upon successfully capturing an energy state $E \le T_{\text{early}}$, the noise-free lattice vector $\mathbf{v} = \mathbf{B}_0 \mathbf{y}_{\text{cand}} \in \mathbb{Z}^{m'}$ is verified. Finally, the true LWE cryptographic secret $\mathbf{s} \in \mathbb{Z}_q^n$ is algebraically recovered by solving $\mathbf{s} \equiv \mathbf{A}_{\text{top}}^{-1}\mathbf{v}_{\text{top}} \pmod q$.

\end{document}